\newtheorem{thm}{Theorem}[section]
\newtheorem{cor}[thm]{Corollary}
\newtheorem{lemma}[thm]{Lemma}
\newtheorem{prop}[thm]{Proposition}
\newtheorem{claim}[thm]{Claim}
\theoremstyle{definition}
\newtheorem*{remm}{Remark}
\newtheorem{remark}[thm] {Remark}
\newtheorem{defn}[thm]{Definition}
\newtheorem{ex}[thm]{Example}
\newcommand{\e}{\varepsilon}
\begin{document} 
\title{\Large{On Convergence to SLE$_6$ II: \\\large{Discrete Approximations and Extraction of Cardy's Formula for General Domains}}}
\date{}
\author
{\normalsize{I.~Binder$^1$, L.~Chayes$^2$, H.~K.~Lei$^2$}
\thanks{\copyright\, 2010 by I. Binder, L.~Chayes and H.~K.~Lei.  Reproduction, by any means, of the entire article for non-commercial purposes is permitted without charge.}}
\maketitle

\vspace{-4mm}
\centerline{${}^1$\textit{Department of Mathematics, University of Toronto}}
\centerline{${}^2$\textit{Department of Mathematics, UCLA}}
 
\begin{quote}
{\footnotesize {\bf Abstract: }}Following the approach outlined in \cite{stas}, convergence to SLE$_6$ of the Exploration Processes for the correlated bond--triangular type models studied in \cite{cardy} is established in \cite{pt1} and the present work.   In this second part, we focus on establishing Cardy's Formula for general domains.


{\footnotesize {\bf Keywords: }}Universality, conformal invariance, percolation, Cardy's Formula.
\end{quote}

\section{Introduction}

In this note we wish to establish the validity of Cardy's Formula for crossing probabilities in a general (finite) 
domain $\Omega \subset \mathbb C$, clarifying certain notions concerning discretization and extraction of appropriate boundary values.  While these issues have been addressed to various extents in e.g., \cite{stas_perc}, \cite{werner_perc}, \cite{cn}, \cite{br}, \cite{rath}, and may seem quite self--evident -- at least for nice (i.e.,  Jordan) domains, a complete and unified treatment for general domains 
appears to be absent.  Moreover, aside from $\ae$sthetic appeal, the generality that appears here is certainly needed for the approach of proving convergence to SLE$_6$ outlined in \cite{stas} (see also \cite{werner_perc}) and carried out in \cite{pt1}.  Our efforts will culminate in the establishment of Theorem \ref{cardy_sup} and Corollary \ref{point_equi_cont}  (which is stated in \cite{pt1} as Lemma 2.6). 

Since it is our intention that this note be self--contained, let us first review the methodology -- introduced in \cite{stas_perc} and adapted to the models in \cite{cardy} (see also \cite{pt1}, $\S$4.1) -- by which Cardy's Formula can be extracted.
%
At the level of the continuum we are interested in a domain $\Omega \subset \mathbb C$ which is a conformal triangle with boundary components $\{\mathcal A, \mathcal B, \mathcal C\}$ and marked prime ends (boundary points) $\{a, b, c\}$ -- all in counterclockwise order -- which represent the intersection of neighboring components.  At the level of the lattice, at spacing $\varepsilon$, we consider an approximate domain $\Omega_\varepsilon$, in which the percolation process occurs and which tends -- in some sense -- to $\Omega$ as $\varepsilon \rightarrow 0$.  At the $\varepsilon$--scale, the competing (dual) percolative forces will be denoted, as is traditional, by ``yellow'' and ``blue''.

Let $z$ be an interior point (e.g., a vertex) in $\Omega_\e$.  We define the discrete crossing probability function $u_\varepsilon^B(z)$ to be probability that there is a blue path connecting $\mathcal A$ and $\mathcal B$, separating $z$ from $\mathcal C$, with similar definitions for $v_\varepsilon^B(z)$ and $w_\varepsilon^B(z)$ along with yellow versions of these functions.  For these objects, standard arguments show that subsequential limits exist; two seminal ingredients are required:  First, they converge to harmonic functions with a particular conjugacy relation between them in the interior and second they satisfy certain (``obvious'')
boundary values.  With these ingredients in hand it can be shown  that the limiting functions are the so called Carleson--Cardy functions.  E.g.,
\[ \lim_{\varepsilon \rightarrow 0} u_\varepsilon^Y = u,\]
and similarly for the $v$'s and $w$'s where, e.g., according to \cite{beffara}, the functions $u, v, w$ are such that 
$$F: = u + e^{2\pi i/3} v + e^{-2\pi i/3} w$$ is the unique conformal map from $\Omega$ to the equilateral triangle formed by the vertices 1, $e^{\pm2\pi i/3}$.  This is equivalent to Cardy's formula.  

We enable this program for a general class of domains \emph{and} their discrete approximations which is suitable for our uses in \cite{pt1}, Lemma 2.6/Corollary \ref{point_equi_cont}.

\begin{remm}
The appropriate discrete conjugacy relations for the $u_\e, v_\e$ and  $w_\e$ have only been established for the models in \cite{stas_perc} and \cite{cardy}.  However, since the RSW estimates are purportedly universal and actually hold for any reasonable critical 2D percolation model, in principle we always have limiting functions $u, v, w$ with some boundary values.  Hence most of the content of the present work should apply. 
However,  certain provisos and clarifications will be required; see Remark \ref{bv_remark}.
\end{remm}

%
In the ensuing arguments we will have many occasion to make use of the uniformization map $\varphi: \mathbb D \rightarrow \Omega$ (where $\mathbb D$ denotes the unit disk) normalized so that say $\varphi(0) = z_0 \in \Omega$ for some point $z_0$ well in the interior of $\Omega$ and $\varphi^\prime(0) > 0$.  We will also identify points on $\partial \mathbb D$ with boundary prime ends of $\partial \Omega$, via the Prime End Theorem.  We refer the reader to e.g., \cite{pommerenke} for such issues.  Finally, the reader may wish to keep in mind that the reason for addressing most of the issues herein is for application to the case where the curves/slits under consideration are percolation interfaces/explorer paths; for discussions on this topic we refer the reader to \cite{pt1}.

\section{The Carath\'eodory Minimum}
We start by dissecting the well--known Carath\'edory convergence, mainly to phrase it in terms of more elementary conditions more suitable for our purposes.  The reader can find similar conditions/discussions in e.g., Section 1.4 of \cite{pommerenke}.

Our general situation concerns a sequence of domains $(\Omega_n)$ which converge in some sense to the limiting $\Omega$ along with functions ($u_n, v_n,w_n)$ converging to a harmonic triple $(u,v,w)$ satisfying the appropriate conjugacy relations.  As a minimal starting point let us consider the following pointwise (geo)metric conditions for domain convergence:


\begin{itemize}
\item[($i_I$)] If $z \in \Omega$, then $z \in \Omega_n$ for all $n$ sufficiently large.
\item[($i_{II}$)] If $z_n \in \Omega_n^c$, then all subsequential limits of $(z_n)$ must lie in $\Omega^c$.
\item[($e$)] For all $z \in \Omega^c$ (including, especially $\partial \Omega$) there exists some sequence $z_{n_k} \in \Omega_{n_k}^c$ such that $z_{n_k} \rightarrow z$. 
\end{itemize}

Conditions ($i_I$) and ($i_{II}$) ensure that limiting values of $u$, $v$ and $w$ in (the interior of) $\Omega$ can be retrieved and are defined by values of $u_n$ inside $\Omega_n$ whereas condition ($e$) implies that $\Omega_n$'s don't converge to a domain strictly larger than $\Omega$, so that the boundary values of $u$ on $\partial \Omega$ might actually correspond to (the limit of) boundary values of $u_n$ on $\Omega_n$.  Indeed, these preliminary conditions turn out to be equivalent to Carath\'eodory convergence (see e.g., \cite{duren}; although in our context we will actually not have occasion to use convergence of the relevant uniformization maps).  More precisely, first we have the following result, whose proof is elementary (and we include for completeness):
  
\begin{prop}\label{metric_cond}
Consider domains $\Omega_n, \Omega \subset \mathbb C$ all containing some point $z_0$.  Then the following are equivalent:
\begin{enumerate}
\item If $K$ is compact and $K \subset 
\Omega$, then $K \subset \Omega_n$ for all but finitely many $\Omega_n$. 
\item \emph{($i_I$)} For all $z \in \Omega$, $z \in \Omega_n$ for all but finitely many $\Omega_n$.

 \emph{($i_{II}$)} If $z_n \in \Omega_n^c$, then all subsequential limits of $(z_n)$ must lie in $\Omega^c$.
 \item If $z \in \Omega$, and $\delta < d(z, \partial \Omega)$, then $B_\delta(z) \subset \Omega_n$, for all but finitely many $\Omega_n$.
\end{enumerate}
\end{prop}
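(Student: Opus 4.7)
The plan is to prove the cycle $(1) \Rightarrow (2) \Rightarrow (3) \Rightarrow (1)$, with each step being a short compactness / contradiction argument.

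For $(1) \Rightarrow (2)$: The implication $(1) \Rightarrow (i_I)$ is immediate, since any singleton $\{z\} \subset \Omega$ is compact. For $(i_{II})$, I would argue by contradiction: suppose $z_n \in \Omega_n^c$ admits a subsequential limit $z \in \Omega$, along, say, $z_{n_k} \to z$. Choose $\delta < d(z,\partial\Omega)$ and set $K = \overline{B_\delta(z)}$, which is compact and contained in $\Omega$. By (1), $K \subset \Omega_{n_k}$ for all large $k$; but for such $k$ also $z_{n_k} \in B_\delta(z) \subset K$, contradicting $z_{n_k} \in \Omega_{n_k}^c$.

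For $(2) \Rightarrow (3)$: Suppose for contradiction that for some $z \in \Omega$ and $\delta < d(z,\partial\Omega)$, one has $B_\delta(z) \not\subset \Omega_{n_k}$ along an infinite subsequence. Pick witnesses $z_{n_k} \in B_\delta(z) \cap \Omega_{n_k}^c$. Since $\overline{B_\delta(z)}$ is compact, a further subsequence converges to some $w \in \overline{B_\delta(z)}$; since $\delta < d(z,\partial\Omega)$ one has $\overline{B_\delta(z)} \subset \Omega$, so $w \in \Omega$. But $(i_{II})$ forces $w \in \Omega^c$, a contradiction.

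For $(3) \Rightarrow (1)$: Let $K \subset \Omega$ be compact. For each $z \in K$ pick $\delta_z = \tfrac{1}{2}d(z,\partial\Omega) > 0$; then $\{B_{\delta_z}(z)\}_{z \in K}$ is an open cover of $K$, and by compactness a finite subcover $\{B_{\delta_{z_i}}(z_i)\}_{i=1}^m$ suffices. By (3), each ball $B_{\delta_{z_i}}(z_i)$ lies in $\Omega_n$ for all but finitely many $n$; since there are only $m$ of them, $K$ itself lies in $\Omega_n$ for all but finitely many $n$, as required.

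None of the steps presents a real obstacle; the only mild care needed is in $(2) \Rightarrow (3)$, to produce a converging sub-subsequence of the putative bad points $z_{n_k}$ and to verify that the limit lies in $\Omega$ (which is why one needs $\delta$ strictly less than $d(z,\partial\Omega)$, so that $\overline{B_\delta(z)} \subset \Omega$). The content of the proposition is really just that the usual compact-containment formulation of Carathéodory kernel convergence is equivalent to the pointwise/local formulation, which is the version naturally suggested by the percolation application.
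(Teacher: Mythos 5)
Your proof is correct and follows essentially the same cycle $(1)\Rightarrow(2)\Rightarrow(3)\Rightarrow(1)$ with the same compactness/contradiction arguments as the paper. The only (trivial) difference is in $(1)\Rightarrow(i_I)$, where you observe that a singleton is already compact, while the paper uses a closed ball $\overline{B_\delta(z)}$; both are fine.
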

\begin{proof}
$1 \Rightarrow 2$) To see ($i_I$) suppose $z \in \Omega$ and $d(z, \partial \Omega) > \delta$, then $\overline{B_\delta(z)} \subset \Omega$ and is compact and hence we have $\overline{B_\delta(z)} \subset \Omega_n$ for all $n$ sufficiently large and hence $z \in \Omega_n$ for all $n$ sufficiently large; conversely,   To see ($i_{II}$), suppose $z_n \rightarrow z$ with $z_n \in \Omega_n^c$ and suppose towards a contradiction that $z \in \Omega$.  Then again arguing as before, $\overline{B_\delta(z)} \subset \Omega_n$ for $n$ sufficiently large, but then $z_n \in B_\delta(z)$ also for $n$ even larger, which implies that these $z_n \in \Omega_n$, a contradiction.

$2 \Rightarrow 3$) Again suppose $d(z, \partial \Omega) > \delta$ so that $\overline{B_\delta(z)} \subset \Omega$.  If it is not the case that $B_\delta(z) \subset \Omega_n$ for $n$ sufficiently large, then we can find a sequence $z_n \in B_\delta(z) \cap \Omega_n^c$.  Since $\overline{B_\delta(z)}$ is compact, there exists a subsequential limit point $z_{n_k} \rightarrow z_*$, but then by ($i_{II}$), $z_* \notin \Omega$, contradicting $\overline{B_\delta(z)} \subset \Omega$.

$3 \Rightarrow 1$)  Let $K \subset \Omega$ be compact.  We can cover $K$ by $K \subset \bigcup_{x \in K} B_{\delta_x}(x)$, with $\delta_x < d(x, \partial \Omega)$. 
By the assumed compactness, there is a finite subcover $K \subset \bigcup_{i=1}^k B_{\delta_{x_i}}(x_i)$.  By 3), for $1 \leq i \leq k$, there exists $N_i$ such that $B_{\delta_{x_i}}(x_i) \subset \Omega_n$ for all $n \geq N_i$, and hence it is the case that $K \subset \Omega_m$ for all $m > \max\{N_1, N_2, \dots, N_k\}$.
\end{proof}

Now the notion of \emph{kernel convergence} -- which in our setting of bounded, simply connected domains is, by the theorem of Carath\'edory equivalent to Carath\'edory convergence, i.e., convergence uniformly on compact sets of the corresponding uniformization maps (see e.g., \cite{duren}, Theorem 3.1) -- requires, in addition (specifically to condition 1 in the above Proposition), that $\Omega$ is the \emph{largest} (simply connected) domain satisfying the above conditions.  The addition of condition ($e$) indeed correspond to maximality; arguments similar to those just presented easily lead to the following (whose proof is elementary and is also included for completeness): 

\begin{prop}
The conditions ($i_I$), ($i_{II}$), ($e$) are equivalent to $\Omega_n$ converging to $\Omega$ in the sense of kernel convergence.  
\end{prop}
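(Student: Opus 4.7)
The plan is to leverage Proposition \ref{metric_cond}, which already translates $(i_I)$ and $(i_{II})$ into the compact-containment condition (condition 1) that defines a candidate kernel, and then to identify condition $(e)$ with the \emph{maximality} requirement built into kernel convergence. So in both directions most of the work is already done; only the role of $(e)$ requires new argument.

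\emph{Forward direction.} Assume $(i_I), (i_{II}), (e)$. Proposition \ref{metric_cond} gives condition 1, so $\Omega$ is a domain containing $z_0$ whose every compact subset lies in all but finitely many $\Omega_n$; this is the defining property of a kernel. For maximality I would argue by contradiction: suppose $\Omega^{\sharp}$ is a simply connected domain with $z_0 \in \Omega^{\sharp} \supsetneq \Omega$ which also satisfies condition 1. Pick $z \in \Omega^{\sharp} \setminus \Omega \subset \Omega^{c}$ and a closed ball $\overline{B_{\delta}(z)} \subset \Omega^{\sharp}$. Condition 1 for $\Omega^{\sharp}$ forces $\overline{B_{\delta}(z)} \subset \Omega_n$ for all large $n$, while $(e)$ produces $z_{n_k} \in \Omega_{n_k}^{c}$ with $z_{n_k} \to z$; for large $k$ these $z_{n_k}$ land in $B_{\delta}(z) \subset \Omega_{n_k}$, the desired contradiction.

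\emph{Reverse direction.} Assume kernel convergence. Then condition 1 holds, and Proposition \ref{metric_cond} returns $(i_I)$ and $(i_{II})$. For $(e)$ I argue the contrapositive: if $z \in \Omega^{c}$ admits no approximating sequence $z_{n_k} \in \Omega_{n_k}^{c}$, then for some $r>0$ we have $B_{r}(z) \subset \Omega_n$ for all large $n$. The clean case is $z \in \partial \Omega$: then $B_{r}(z) \cap \Omega \neq \emptyset$, and $\Omega^{\sharp} := \Omega \cup B_{r}(z)$ is a connected open set strictly larger than $\Omega$; any compact $K \subset \Omega^{\sharp}$ can be split (via a partition of unity or a simple open cover argument) into a piece in $\Omega$ and a piece in $B_{r}(z)$, each eventually contained in $\Omega_n$, so condition 1 holds for $\Omega^{\sharp}$; and $\Omega^{\sharp}$ is simply connected because it is the union of two simply connected planar open sets with connected intersection $B_{r}(z) \cap \Omega$. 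This contradicts the maximality of $\Omega$. If instead $z \notin \overline{\Omega}$, one reduces to the boundary case by walking $z$ along a path in $\Omega^{c}$ toward $\partial \Omega$ (available since for bounded simply connected $\Omega \subset \mathbb{C}$ the complement is connected) and choosing the first point $z^{*} \in \partial \Omega$ at which the eventual-containment property still persists for some ball $B_{r^{*}}(z^{*})$.

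\emph{Main obstacle.} The forward direction and the extraction of $(i_I), (i_{II})$ in the reverse direction are essentially bookkeeping on top of Proposition \ref{metric_cond}. The genuinely delicate step is the reverse implication for $(e)$ in the ``non-boundary'' case $z \notin \overline{\Omega}$: one needs to propagate the eventual-containment of a ball around $z$ back to a ball around a boundary point in order to genuinely enlarge $\Omega$, and to verify simple connectedness of the enlargement. Once the problem is reduced to $z \in \partial \Omega$, the Mayer--Vietoris-style verification that $\Omega \cup B_{r}(z)$ remains simply connected, using the connectedness of $B_{r}(z) \cap \Omega$, closes the argument.
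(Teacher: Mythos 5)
Your forward direction is correct and runs on the same engine as the paper's, in a slightly cleaner form: you pick any $z \in \Omega^\sharp \setminus \Omega$ and apply the compact--containment property directly to $\Omega^\sharp$, whereas the paper specializes to a point of $\partial\Omega \cap \Omega'$ and invokes $(i_{II})$ for $\Omega'$ -- the two are interchangeable by Proposition \ref{metric_cond}.

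In the reverse direction you share the paper's key idea (negate $(e)$ to produce a ball $B_r(z)\subset\Omega_n$ eventually, then contradict maximality), but you introduce a genuine gap by trying to verify simple connectedness of $\Omega^\sharp=\Omega\cup B_r(z)$: the assertion that $B_r(z)\cap\Omega$ is connected is false in general. If $z$ lies on a slit of a slit domain, or if $B_r(z)$ bridges a narrow neck in $\Omega^c$, the intersection has several components and $\Omega^\sharp$ can acquire a handle, so the Van Kampen argument does not apply. The good news is that the step is also unnecessary: the kernel in Carath\'eodory's theorem is the largest \emph{domain} (connected open set) containing $z_0$ with the compact--containment property -- simple connectedness of the competitor is not part of the definition -- so the fact that $\Omega^\sharp$ is connected, open, strictly larger and compact--containing already contradicts maximality. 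That is exactly what the paper's one--line argument does, with no topology of unions entering.

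On your ``case 2'' ($z\notin\overline{\Omega}$): you are right to flag it as the delicate point, but the path--walking reduction cannot actually be made to work. The set of points with the eventual--containment property is open, and nothing forces the component of that set containing $z$ to meet $\overline{\Omega}$ at all (picture $\Omega_n$ equal to $\Omega$ together with a remote ball joined by a tube of width $1/n$: the kernel is $\Omega$, yet $(e)$ fails at the remote center). The paper's proof is silent about this case as well; the implicit understanding, both here and in the intended applications to discrete approximations, is that the force of $(e)$ lies at $\partial\Omega$, and restricted there both your argument (once freed of the connectedness claim) and the paper's go through.
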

\begin{proof}
In light of the above discussion, it is sufficient to show that the condition ($e$) is equivalent to the maximality condition on $\Omega$ required by kernel convergence.  

$\Rightarrow$) Suppose $\Omega$ is not maximal and hence $\Omega \subsetneq \Omega^\prime$ where $\Omega^\prime$ satisfies ($i_I$) and ($i_{II}$).  It must be the case then there is a point $z \in \partial \Omega \cap \Omega^\prime$.  By condition ($e$) there exists $z_{n_k} \rightarrow z$ with $z_{n_k} \in \Omega_{n_k}^c$, but condition ($i_{II}$) for $\Omega^\prime$ implies that $z \in (\Omega^\prime)^c$, a contradiction. 

$\Leftarrow$)  Conversely, suppose $\Omega$ is maximal and assume towards a contradiction that $\Omega$ does not satisfy ($e$), so that there exists some point $z \in \Omega^c$ and some $\delta > 0$ such that $B_\delta(z) \subset \Omega_n$ for all $n$ sufficiently large.  By the maximality of $\Omega$, it must be the case that $\overline {B_\eta(z)} \subset \Omega$ for any $\eta < \delta$, which implies in particular that $z \in \Omega$, a contradiction.   
\end{proof}

As is perhaps already clear, Carath\'edory convergence is insufficient for our purposes: Since the functions $u, v, w$ must acquire prescribed boundary values on separate pieces of $\partial \Omega$, it is manifest that (some notion of) separate convergence of the corresponding pieces of the boundary in $\partial \Omega_n$ will be required.
Special attention is needed for the cases of domains with slits -- which are of seminal importance when we consider the problem of convergence to SLE$_\kappa$.  The situation is in fact rather subtle: Note that in both Figure \ref{bad1} and Figure \ref{swallow}, we have that $\Omega_\e$ Carath\'edory converges to $\Omega$, but whereas the situation in Figure \ref{bad1} disrupts establishment of the proper boundary value, the situation in Figure \ref{swallow} is perfectly acceptable (see Remarks \ref{int_remark} and \ref{sup_remark}).

\section{Interior Approximations}

We will begin by considering the \emph{interior} approximations, where $\Omega_\e \subset \Omega$ for all $\e$.  For earlier considerations along these lines, see \cite{duffin} and \cite{ferrand}.  
Here, the crucial advantage is that all domains can be viewed under a \emph{single} uniformization map; this allows for relatively simple resolution of all concerns of a geometric/topological nature.  Moreover,  this appears to be the simplest setting for the purposes of establishing Cardy's Formula in a fixed (static) domain (see especially Example \ref{can_app} below).  In particular, for circumstances where this is all that is of interest, the reader is invited to skip the next section altogether.  We start with

\begin{defn}[Interior Approximations]\label{int_app}
We call $(\Omega_\e^\bullet)$ an \emph{interior approximation} to $\Omega$ if:

(I) The domains $\Omega_\e^\bullet$ consist of one or more (graph) connected components, each of which is bounded by a closed polygonal path, and the union of all such polygonal paths we identify as the boundary $\partial \Omega_\e^\bullet$.  In particular, $\partial \Omega_\e^\bullet$ consists exclusively of polygonal edges each of which is a portion of the border of an element in $(\Omega_\e^\bullet)^c$.

\vspace{2mm}
(II) The boundary $\partial \Omega_\e^\bullet$ is divided disjoint segments, denoted by $\mathcal A_\e, \mathcal B_\e, \dots$ in (rough) correspondence with the (finitely many) boundary components $\mathcal A, \mathcal B, \dots$ of the actual domain $\Omega$.  In case $\Omega_\e$ is a single component, these are joined at vertices $a_\e, b_\e, \dots$ corresponding to the appropriate marked prime ends.  In the multi--component case, if necessary, a similar procedure may be implemented, implying the possible existence of several $a_\e$'s etc.  When required, \emph{the} $a_\e, b_\e, \dots$, etc., will be the one corresponding to the ``principal'' component of $\Omega_\e$, namely, the component which contains the point $z_0$, which, we recall, served to normalize the uniformization map.  Here it is tacitly assumed that $\varepsilon$ is small enough so that this component has a representative of each type.

Further, we require the following:

\vspace{2mm}
(i) It is always the case that $\Omega_\e^\bullet \cup \partial \Omega_\e^\bullet \subset \Omega$.  That is, $\Omega_\e^\bullet$ is in fact a \emph{strictly} inner approximation.

\vspace{2mm}
This property ensures that indeed all of $\overline \Omega_\e$ can be viewed under the (single) conformal map $\varphi: \mathbb D \rightarrow \Omega$ in the ensuing arguments. 

\vspace{2mm}
(ii) Each $z \in \Omega$ lies in $\Omega_\e^\bullet$ for all $\e$ sufficiently small.  

\vspace{2mm}
It can be seen that conditions (i) and (ii) imply that for any $z \in \partial \mathcal A$, there exists some sequence $z_\e \rightarrow z$ with $z_\e \in \mathcal A_\e$, and similarly for $\mathcal B$, etc.

\vspace{2mm}
 (iii) Given any sequence $(z_\e)$ with $z_\e \in \mathcal A_\e$ for all $\e$, any subsequential limit must lie in $\mathcal A$.  Moreover, this must be true in the stronger sense that for any subsequential limit $\varphi^{-1}(z_{\e_n}) \rightarrow \zeta \in \partial \mathbb D$ then $\zeta \in \varphi^{-1}(\mathcal A)$.  Similarly for $\mathcal B$, etc.

\vspace{2mm}
In particular, any subsequential limit of the $(a_\e)$'s will converge to a point in $a$, and similarly for $b$, etc.
\end{defn}

\begin{remark}\label{int_remark}~~~
\vspace{0.2cm}

$\bullet$ To avoid confusion, by the above method, an interior approximation to any slit domain -- no matter how smooth the slit -- necessarily consist of at least a small cavity of a few lattice spacings.  It is noted that the explorer process itself produces just such a cavity. 

$\bullet$ It is easy to check that interior approximations satisfy conditions ($i_I$), ($i_{II}$), ($e$).

$\bullet$ Condition (iii) is indeed used to ensure that the limiting boundary values are unambiguous and correspond to the desired result (see Lemma \ref{int_bound_value}).  A simple scenario where careless approximation leads to the wrong boundary value is illustrated in Figure \ref{bad1}.  

$\bullet$  Note that even though for convenience we have assumed in (iii) that $z_\e \in \mathcal C_\e$ and have used the uniformization map $\varphi$, what is sufficient is that if $z_k \rightarrow z \in \mathcal C$, then for all but finitely many $k$, $z_k$ should be close to $\mathcal C_\e$, in some appropriate sense.  Indeed, we shall have occasion to formulate such a definition later, for the statement of Lemma \ref{top_consist}.

\end{remark}

\begin{figure}
 \vspace{-0.7cm}
\begin{center}
    \includegraphics[width=4.5in]{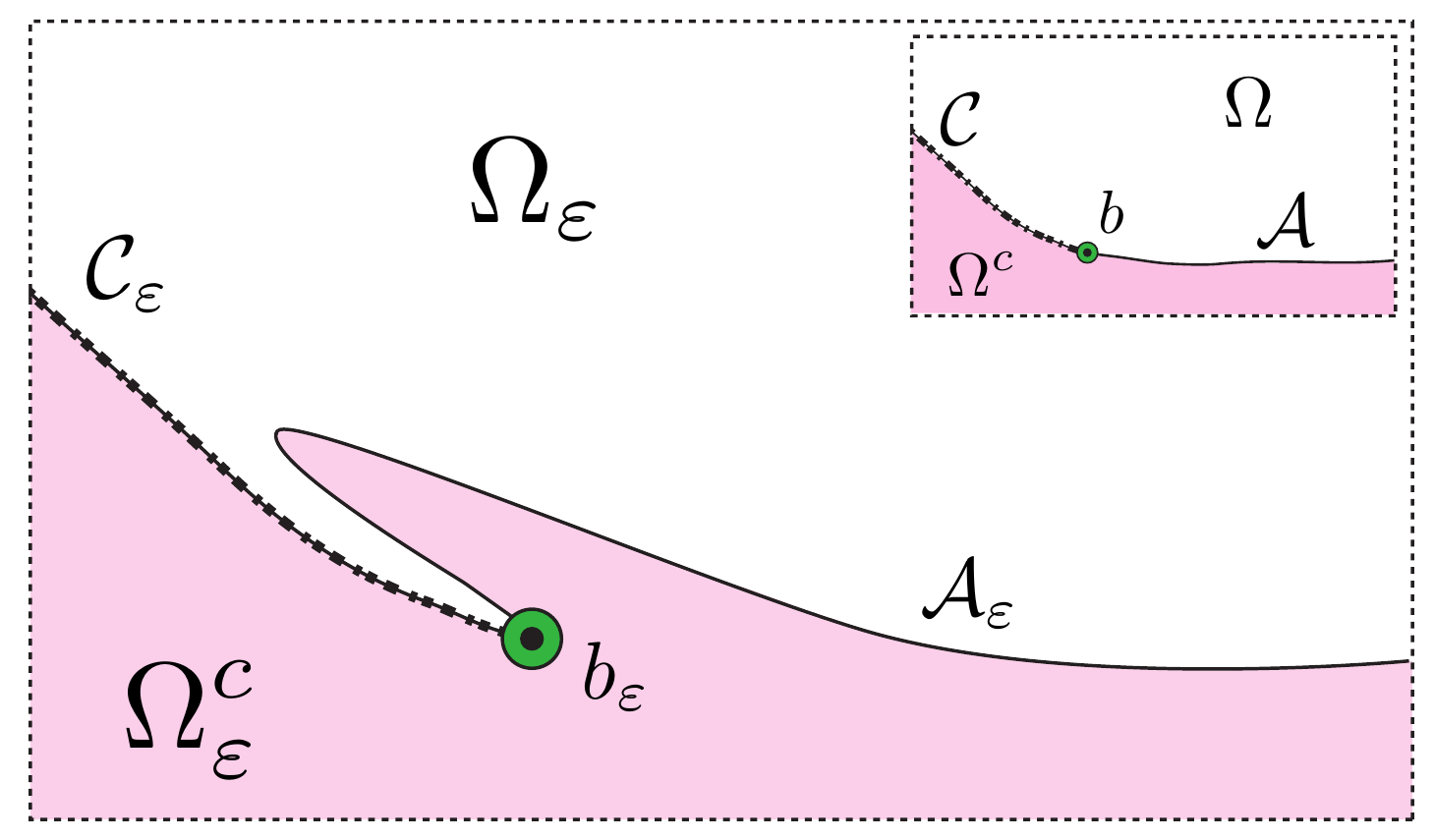}
 \end{center}
 \vspace{-0.5cm}
 \caption{\footnotesize{Violation of condition ii) in Definition \ref{int_app}, which would lead to incorrect (limiting) boundary values.}}\label{bad1}
 \vspace{-0.5cm}
 \end{figure}

\begin{ex}\label{can_app}
An example of an interior approximation is what we will call the \emph{canonical approximation}, constructed as follows.  To be definitive, consider a tiling problem with finitely many types of tiles.  We formally define the scale $\e$ to be the maximum diameter of any tile.  As usual, we may regard all of $\mathbb C$ as having been tiled -- ``$\mathbb C_\e$''.  The domain $\Omega_\e$ is defined as precisely those tiles in $\mathbb C_\e$ which are entirely (including their boundary) in $\Omega$.  Clearly then this construction satisfies (i); condition (ii) is also satisfied: if $z \in \Omega$ is such that $d(z, \partial \Omega) > \e_0$, then $z \in \Omega_\e$ for all $\e < \e_0$.  

At this stage $\partial \Omega_\e$ is just one or more closed polygonal paths.  The boundary component types are determined as follows:  For the marked points, e.g., $a$, consider the neighborhood $Q_\delta(a)$ defined as follows: Let $\mathfrak c_\e$ denote a sequence of crosscuts of $\varphi^{-1}(a)$ with the property that $\varphi(\mathfrak c_\e)$ contains a $\delta$ neighborhood of $a$ with $\delta/\e \rightarrow \infty$ and $\delta(\e) \rightarrow 0$; $Q_\delta(a)$ is then the set bounded by $\varphi(\mathfrak c_\e)$ and the relevant portion of $\partial \Omega$.
  It is clear, for $\e$ small, that ``outside'' these neighborhoods, the assignment of boundary component type is unambiguous.  Here we say a boundary segment is ``outside'' $Q_\delta(a)$ etc.,  if all tiles (intersecting $\Omega$) touching the segment in question lie in the complement of $Q_\delta(a)$.  Indeed, each segment of $\partial \Omega_\e$ belongs to a tile that intersects the boundary.  For a fixed element of $\partial \Omega_\e$ satisfying the above definition of ``outside'', \emph{some} of the external tile is in $\Omega$ and therefore under $\varphi^{-1}$, the image of this portion of the tile joins up with $\partial \mathbb D$; furthermore, it joins with a unique boundary component image due to the size of the obstruction provided by $Q_\delta(a)$.  Finally, inside these neighborhoods $Q_\delta(a)$, etc., all that must be specified are the points $a_\e$, etc., which as discussed above, may have multiple designations (due to the possibility of multiple components for $\Omega_\e$).  The rest of the boundary is then assigned accordingly.
  
Finally, let us establish (iii):
\begin{claim}
The canonical approximation satisfies (iii).
\end{claim}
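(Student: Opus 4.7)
The plan is to break the argument into cases based on where the limit $z^* := \lim z_{\e_n}$ sits on $\partial\Omega$, using the canonical labeling rule together with the Prime End Theorem for $\varphi$. Write $\tilde{\mathcal A} := \varphi^{-1}(\mathcal A)$ for the boundary arc of $\partial\mathbb D$ in question. The first step is to observe that $z_{\e_n}$ lies on a segment of $\partial\Omega_{\e_n}$ shared with an external tile $T'_{\e_n}$ of diameter at most $\e_n$ which meets $\Omega^c$; hence $\mathrm{dist}(z_{\e_n},\partial\Omega) \leq \e_n \to 0$, so any subsequential limit $z^*$ lies in $\partial\Omega$, and by the open mapping theorem applied to $\varphi$ we get $\zeta \in \partial\mathbb D$.

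If $z^* \in \{a,b,c\}$, say $z^* = a$, then $z_{\e_n}$ lies eventually in any preassigned Euclidean ball about $a$ and in particular inside $Q_{\delta(\e_n)}(a)$ for $n$ large. The canonical convention near marked points then forces $\zeta$ to be the prime end $\varphi^{-1}(a)$, which is the shared endpoint of $\tilde{\mathcal A}$ and $\tilde{\mathcal C}$ and hence lies in $\tilde{\mathcal A}$.

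The substantive case is $z^* \notin \{a,b,c\}$. Since $\delta(\e_n) \to 0$ while $\mathrm{dist}(z^*,\{a,b,c\}) > 0$, for $n$ large the segment carrying $z_{\e_n}$ sits outside every $Q_{\delta(\e_n)}$-neighborhood. The label $\mathcal A_{\e_n}$ was therefore assigned by the canonical recipe: the relevant connected component $U_n$ of $T'_{\e_n} \cap \Omega$ has preimage $\tilde U_n := \varphi^{-1}(U_n)$ whose closure in $\overline{\mathbb D}$ meets $\partial\mathbb D$ only inside the open arc $\tilde{\mathcal A}$. The shared edge $\sigma_n = T_{\e_n} \cap T'_{\e_n}$ lies in $\Omega$ (since $T_{\e_n} \subset \Omega$), hence $\sigma_n \subset U_n$ and $\varphi^{-1}(z_{\e_n}) \in \tilde U_n$. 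Since $\mathrm{diam}_{\mathbb C}(U_n) \leq \e_n \to 0$ and $U_n$ is adjacent to $\partial\Omega$ at $z^*$, an application of the Prime End Theorem shows that $\tilde U_n$ collapses in $\overline{\mathbb D}$ to a single prime end $\zeta^*$ at $z^*$; the single-arc constraint then pins $\zeta^* \in \tilde{\mathcal A}$, and since $\varphi^{-1}(z_{\e_n}) \in \tilde U_n$ we obtain $\zeta = \zeta^* \in \tilde{\mathcal A}$, as required.

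The main obstacle I anticipate is the last collapse step: translating Euclidean smallness of $U_n$ in $\Omega$ into Euclidean smallness of $\tilde U_n$ in $\overline{\mathbb D}$. The single-arc restriction on $\overline{\tilde U_n}\cap\partial\mathbb D$ is essential here, as it forbids $\tilde U_n$ from being a thin connected tendril reaching prime ends on several distinct boundary arcs; together with the Prime End Theorem it forces the prime-end limit to a unique point of $\tilde{\mathcal A}$, completing the argument.
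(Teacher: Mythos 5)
Your proof takes essentially the same route as the paper's: first pin down $z^*\in\partial\Omega$ by the Euclidean estimate $d(z_{\e_n},\partial\Omega)\le\e_n$ and $d(z_{\e_n},\mathcal A)\lesssim\delta(\e_n)$, dispose of the marked-point case, and then in the substantive case invoke the canonical labeling rule (the external-tile portion $U_n$ having preimage accumulating only inside $\tilde{\mathcal A}$, thanks to the $Q_\delta$ obstruction) to place the limit $\zeta$ on $\tilde{\mathcal A}$. The paper phrases the final step via the set $M_\eta := \mathcal N_\eta(\mathcal A)\cap\varphi[\mathcal N_\eta(\varphi^{-1}(\mathcal A))]$ and a two-band separation of the $\eta$-collar of $\partial\mathbb D$ away from $\alpha_1,\alpha_2$, whereas you phrase it as ``$\tilde U_n$ collapses to a single prime end in $\tilde{\mathcal A}$''; these are two descriptions of the same mechanism, and both leave a comparable amount of the underlying prime-end bookkeeping implicit. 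One small imprecision to watch: as literally stated, ``$\tilde U_n$ collapses to a single prime end at $z^*$'' is stronger than needed and not automatically true without the labeling constraint (as you yourself flag); what you actually use, and what suffices, is that for every fixed $\eta>0$ one has $\tilde U_n\subset\mathcal N_\eta(\tilde{\mathcal A})$ for $n$ large, which is exactly the content of the paper's $M_\eta$ step. So the proposal is correct and follows the paper's approach, with the final contraction argument articulated in slightly different but equivalent language.
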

\begin{proof}
Let $z_\e \in \mathcal A_\e$ with some subsequential limit $z$.  It is clear that $z \notin \Omega$ since all $z \in \Omega$ are a finite distance from the boundary while $d(z_\e, \partial \Omega) \leq \e$ by construction.  Moreover, $z \in \mathcal A$ since $d(z_\e, \mathcal A)$ is (generally less than $\e$ but certainly) no larger than $\delta(\e)$.  It remains to show the stronger statement that any subsequential limit of $\varphi^{-1}(z_\e)$ is in $\varphi^{-1}(\mathcal A)$.  If $\zeta_\e = \varphi^{-1}(z_\e)$ converges to the image of a \emph{marked} point in $\varphi^{-1}(\mathcal A)$ there is nothing to prove.  Thus we may assume that, eventually, $\zeta_\e$ is outside any $\kappa$--neighborhood of the marked points $\alpha_1, \alpha_2 \in \varphi^{-1}(\mathcal A)$ for some $\kappa$.  Now let $\eta < \kappa$ such that the $\eta$ neighborhood of $\partial \mathbb D \setminus [B_\kappa(\alpha_1) \cup B_\kappa(\alpha_2)]$ consist of two disjoint components, one containing all of the rest of $\varphi^{-1}(\mathcal A)$ and the other associated with $\varphi^{-1}(\partial \Omega \setminus \mathcal A)$.  Finally consider the neighborhood (here $\mathcal N_\eta(\cdot)$ denotes the Euclidean $\eta$ neighborhood of $(\cdot)$)
\[ M_\eta := \mathcal N_\eta(\mathcal A) \cap \varphi[\mathcal N_\eta(\varphi^{-1}(\mathcal A)].\]
Since it is agreed that $z_\e$ stays outside $\varphi(B_\kappa(\alpha_1) \cup B_\kappa(\alpha_2))$ it is clear that, for all $\e$ sufficiently small, $z_\e \in M_\eta$ and therefore $\zeta_\e \in \mathcal N_\eta(\varphi^{-1}(\mathcal A)) \setminus [B_\kappa(\alpha_1) \cup B_\kappa(\alpha_2)]$ and not in the complementary $\eta$ band described above.  It follows that the limit must be in $\varphi^{-1}(\mathcal A)$.

\end{proof}
\end{ex}

\section{Sup--Approximations}

Unfortunately, for various purposes, e.g., certain proofs of convergence to SLE$_6$, we will need slightly more generality than the internal approximations as provided in Definition \ref{int_app}.  Specifically, we shall have to consider slit domains where (in a certain sense) the slit is evolving dynamically and where, 
at the 
$\varepsilon$--level, the slit is determined stochastically.  In particular, we are not at liberty to approximate the domains in the most convenient fashion; more generality will be required.  

Here, informally, we will describe the two additional properties which are essential in this context:

$\bullet$ Actual sup--norm convergence of separate sides of the slits (which in the discrete approximations may well be separate curves):  This is to prevent the masking of one boundary value by another near the joining of boundaries.

$\bullet$  The \emph{well--organization} property:  This is to prevent confusion of boundary values that could be caused by intermingling (crisscrossing) of the two curves approximating the opposite sides of the slit.

Scenarios in violation of these properties are depicted in Figure \ref{bad2}.

\begin{remark}\label{sup_dist}
If $\gamma_1$ and $\gamma_2$ are two curves, then as usual the sup distance between them is given as  
\[\mbox{dist}(\gamma_1, \gamma_2) = \inf_{\varphi_1, \varphi_2} \sup_t |\gamma_1(\varphi_1(t)) - \gamma_2(\varphi_2(t))|.\]
For certain purposes, it is pertinent to consider weighting the sup--norms of portions of the curves in accord with the particular crosscut in which the portion resides.  We will denote the associated distance by $\mathbf{Dist}$; see \cite{pt1}, $\S3.2$ for the definition and discussions.  However, our ensuing arguments will not be sensitive as to whether we are using the original sup--norm or the weighted version and thus we will continue to use the sup--norm.
\end{remark}

\begin{figure}
 \vspace{-0.7cm}
\begin{center}
    \includegraphics[width=6in]{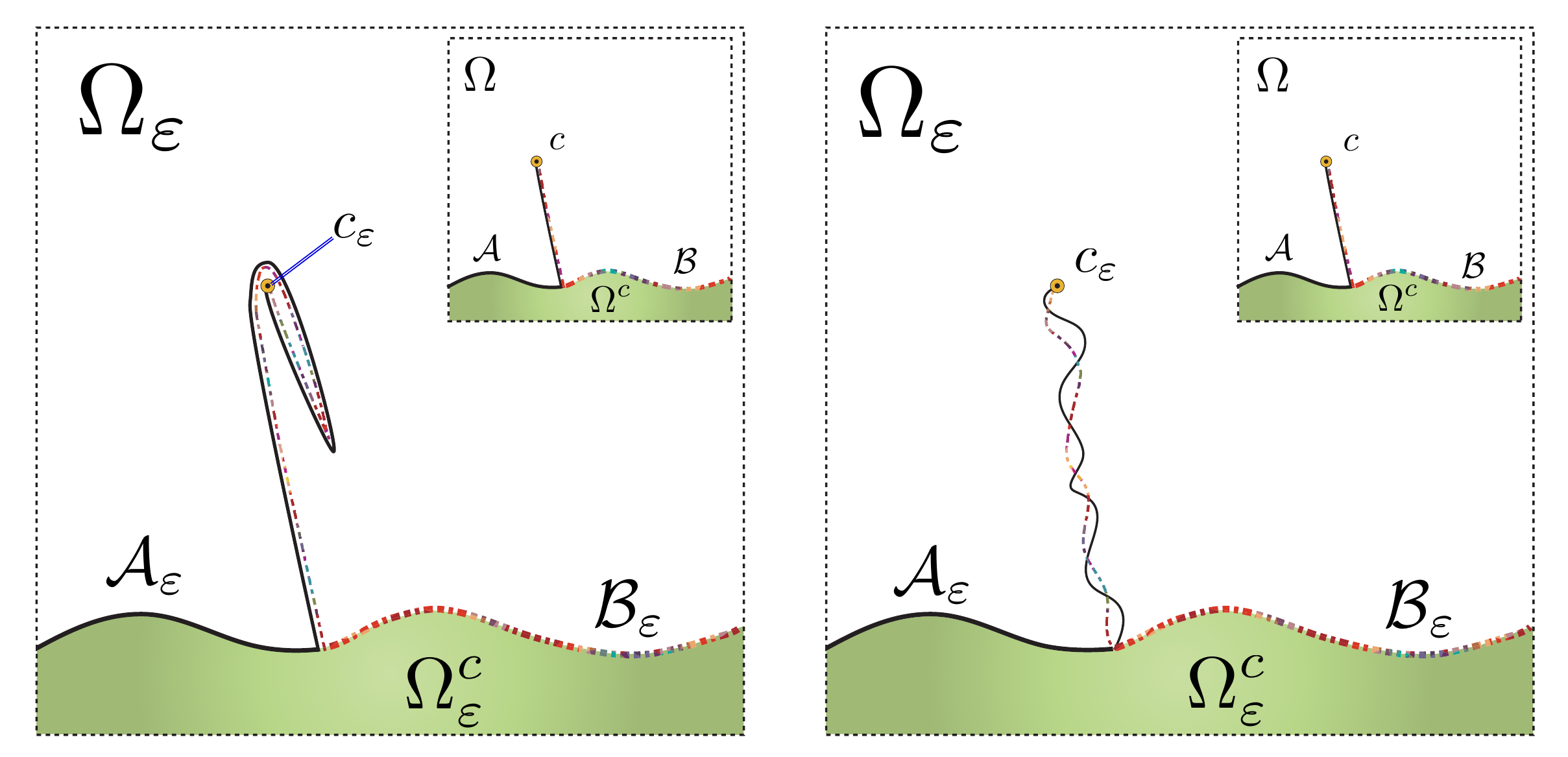}
 \end{center}
 \vspace{-0.5cm}
 \caption{\footnotesize{Masking and intermixing of boundary values.}}\label{bad2}
 \vspace{-0.5cm}
 \end{figure}

\begin{defn}[Sup--approximations]\label{sup_defn}
Suppose $\partial \Omega$ can be further divided (perhaps by other marked boundary points) with the boundary between \emph{these} points described by Jordan arcs or, more generally, 
L\"owner curves.  We shall label the new points $J_1, J_2$, etc. and between certain pairs, e.g.,  $J_k ~\&~  J_{k+1}$  will be a  
L\"owner curve denoted by 
$[J_k, J_{k+1}]$.  The marked prime ends $a, b, \dots$ may serve as an endpoints of (some of) these segments, but it is understood that they do not reside \emph{inside} these arcs.  

Some of this curve (often enough all of it) will be part of the boundary $\partial \Omega$.  (On the other hand, it can be envisioned that a portion of this curve lives in a ``swallowed'' region and is part of $\Omega^c$.)  At the discrete level, we recall that $\partial \Omega_\e$ is automatically a union of closed self--avoiding curves.  It will be supposed that $\Omega_\e$ has corresponding $J_1^\e, J_2^\e, \dots$ and the relevant portion of the curve between the relevant $J$--pair converges in sup--norm to the corresponding portions in $\partial \Omega$ -- or $\Omega^c$, as the case may be -- at rate $\eta(\e)$.  

We assume that all of this transpires in such a way that the following property, which we call \emph{well organized}, holds:
For any curve of interest $[J_k^\e, J_{k+1}^\e]$, pick points $p$ and $p^\prime$ on this arc.  Consider $\delta$--neighborhoods around $p$ and $p^\prime$ and consider the portion of the arc joining these neighborhoods (last exit from neighborhood around $p$ to first entrance to neighborhood around $p^\prime$), which we label $\mathscr L$.  Let $\mathscr P$ be any path connecting the boundaries of these neighborhoods to one another in the complement of all $\partial \Omega_\e$.  Then the relevant portions of the $\partial B_\delta(p)$, $\partial B_\delta(p^\prime)$, $\mathscr P$ and $\mathscr L$ clearly form a Jordan domain, whose interior we denote by $\mathcal O$.  Let $\mathcal O^\prime \subset \mathcal O$ denote the connected component of $\mathscr P$ in $\overline{\mathcal O} \setminus \partial \Omega_\e$.  Then, $\partial \mathcal O^\prime \cap \partial \Omega_\e$ is \emph{monochrome}, i.e., it cannot intersect both $[J_k^\e, J_{k+1}^\e]$ and $[J_\ell^\e, J_{\ell+1}^\e]$ for $k \neq \ell$.  While this may sound overly complicated, what we have in mind is actually a simple topological criterion, c.f., Remark \ref{sup_remark}.

The rest of the domain and boundary is approximated by interior approximation.  Thus, for those $J_k$'s which divide arc--portions of $\partial \Omega$ from ``other'', we require \emph{commensurability} at the joining points.  In particular, in order that the interior approximation be implementable, it is clear that we must require 
$J_k^{\e}\in \Omega$.
\end{defn}

\begin{remark}\label{sup_remark}~~~
\vspace{0.2cm}

$\bullet$ While at first glance it is difficult to imagine that $\partial\mathcal O^\prime$ is anything except, say $\mathscr L$, what we have in mind is when $\mathscr L$ and a neighboring curve are some approximation to a two--sided slit.  The well--organized property does not permit the sides of the approximation to crisscross one another.  Alternatively, this is a simple topological criterion which can be phrased as saying that under say the uniformization map (in fact any homeomorphism onto a Jordan domain would do) the image of each of these $J$--pieces occupies a single contiguous piece of the boundary.  This sort of monochromicity property is required for well--behaved convergence of relevant boundary conditions we shall need later.  
Then a crisscrossing approximation can very well lead to altogether different limiting values -- or none at all.  It is clear that this well--organized property is satisfied by the trace of any discrete percolation explorer process.

$\bullet$ Sup--approximations satisfy conditions ($i_I$), ($i_{II}$), $(e)$.

$\bullet$ The added difficulty here is that since the approximation is no longer interior, we can no longer determine the ``topological situation'' by looking under a single conformal map.  E.g., for a point close to the boundary, we can no longer determine which boundary piece it is ``really'' close to.  This is exemplified by the case of a slit domain: If, say, part of $\mathcal C$ is one side of a two--sided slit $\gamma$, then points close to $\gamma$ on one side (corresponding to $\mathcal C$) will have small $u$ value which tends to 0 whereas points close to $\gamma$ on the other side (corresponding to say $\mathcal B$) will tend to non--trivial boundary values.  In the case of interior approximation all such ambiguities were resolved by looking under the conformal map $\varphi^{-1}$.  

$\bullet$ It is worth noting that the important case in point where the boundary consist of an original $\Omega$ with a (L\"oewner) slit -- which might be two sided -- falls into the setting under consideration.  In particular, we will have occasion to consider cases where we have $\gamma_n \rightarrow \gamma$ in the sup--norm with $\gamma_n$ being discrete explorer paths.  In this case to check condition ($i_{II}$), we observe that if $\gamma_n \rightarrow \gamma$ in sup--norm and $z_n \in \gamma_n$ and $z_n \rightarrow z$, then $z \in \gamma$ and hence certainly in the complement of the the domain of interest $\Omega \setminus \mathbb I(\gamma)$ (i.e., $\Omega$ delete $\gamma$ together with components ``swallowed'' by $\gamma$).  For an illustration see Figure \ref{swallow}.  These circumstances may be readily approximated by a hybrid of sup-- and canonical approximations and, as is not hard to see, satisfy the condition of commensurability. 
\end{remark}

\begin{wrapfigure}{r}{0.4 \textwidth}
 \vspace{-0.84cm}
\begin{center}
    \includegraphics[height=2.2in]{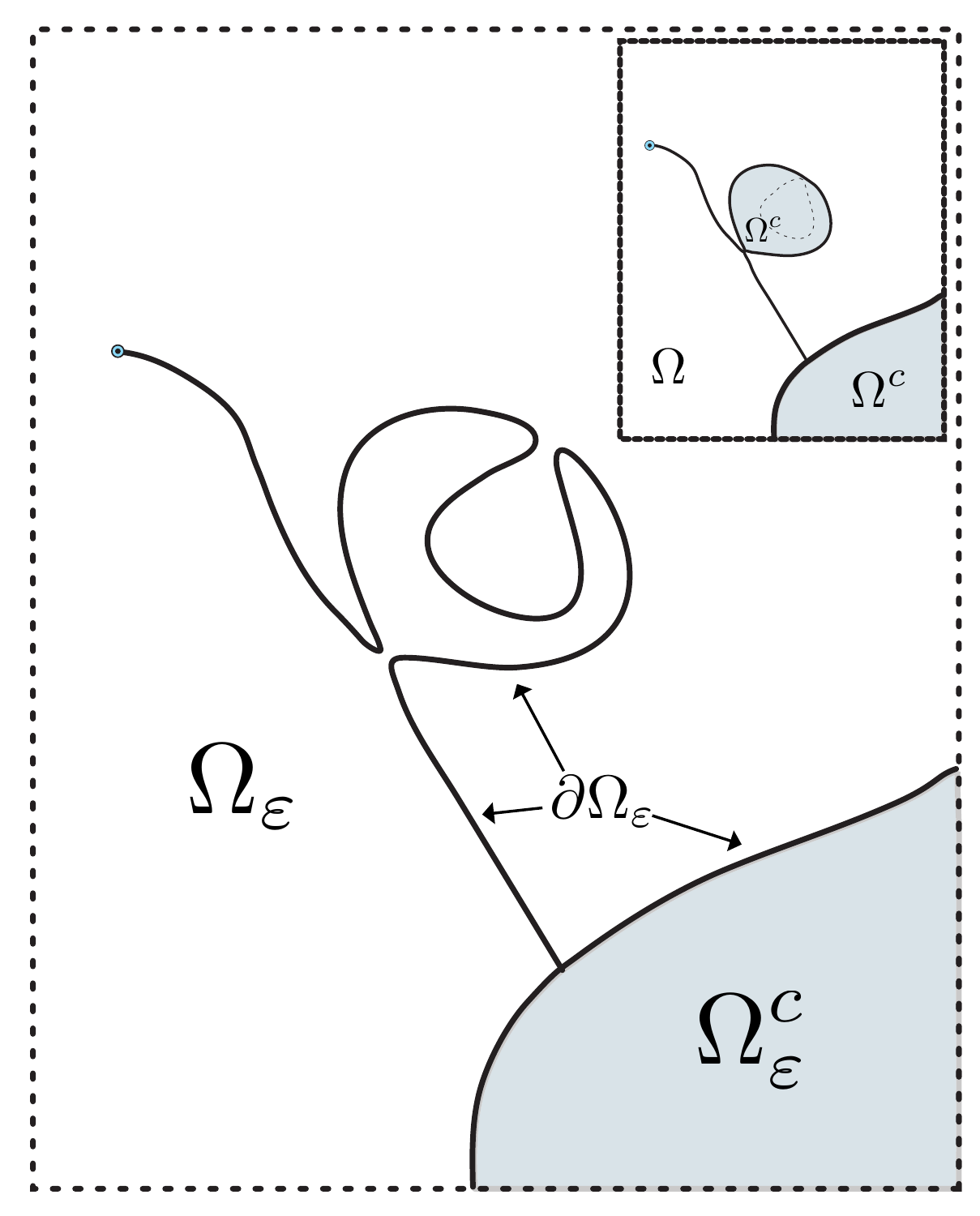}
 \end{center}
 \vspace{-0.5cm}
 \caption{\footnotesize{A case where the limiting domains does not contain a component present in approximating domains.  Due to frequent self--touching, such (limiting) domains are in fact typical of SLE$_6$.}}\label{swallow}
 \vspace{-0.3cm}
 \end{wrapfigure}

The main addition is the following lemma which serves the role of condition (iii) in Definition \ref{int_app} to ensure unambiguous retrieval of boundary values (see Lemma \ref{sup_bound_value}).  
That is, if $w\in \Omega$ is close to $\mathcal C$ in the ``homotopical sense'' that any \emph{short} walk from $w$ which hits $\partial \Omega$ must hit the $\mathcal C$ portion of $\partial \Omega$ then $w$ is close to $\mathcal C_n$ by the same criterion.  A precise statement of this intuitive notion is, unfortunately, much more involved.


\begin{lemma}[Homotopical Consistency]\label{top_consist}
Consider a domain $\Omega$ with marked boundary prime ends $a, b \in \partial \Omega$.  Let us focus on boundary $\mathcal C$ with end points $a$ and $b$ which we consider to be the bottom of the boundary.  (Note that $\mathcal C$ may consist of Jordan arcs together with arbitrary parts -- if double--sided slits are involved, such that not both sides belong to $\mathcal C$, then the corresponding arc(s) must be connected all the way up to $b$ and/or $a$).  Let us denote the sup--approximation to $\Omega$ by $\Omega_n$ and the portion of the boundary approximating $\mathcal C$ by $\mathcal C_n$.  


Suppose we have a point $q$ which is more than $\Delta$ away from $a$ and $b$ and $\delta^\star$ away from $\mathcal C$ with $\Delta \gg \delta^\star$, such that $\vartheta = \varphi^{-1}(q)$ is close to $\varphi^{-1}(\mathcal C)$.  Then there exists $\eta > 0$ with $\eta \ll \delta^\star$ such that if $\mbox{\emph{dist}}(\mathcal C_n, \mathcal C) < \eta$ (here \emph{dist} denotes e.g., the sup--norm distance where appropriate, and otherwise the Hausdorff distance) then there exists some path $\mathscr P$ from (some point in) $\varphi^{-1}(B_\Delta(a))$ to (some point in) $\varphi^{-1}(B_\Delta(b))$ (we denote this by $\varphi^{-1}(B_\Delta(a)) \leadsto \varphi^{-1}(B_\Delta(b))$) such that in the sup--approximation $\Omega_n$, $q$ is in the bottom component of $\Omega_n \setminus \varphi(\mathscr P \cup B_\Delta(a) \cup B_\Delta(b))$ and further, any walk from $q$ in the bottom component which hits $\partial \Omega_n$ must hit $\mathcal C_n$.
\end{lemma}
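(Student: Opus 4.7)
The natural arena is the disk, via the normalized uniformization map $\varphi:\mathbb D\to\Omega$.  The strategy is to build the candidate curve $\mathscr P$ in $\mathbb D$, transport it back to $\Omega$, and then use the sup-norm smallness of $\eta$ together with the well--organized property to propagate all of the relevant topological features into the approximating domain $\Omega_n$.

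\smallskip
\noindent\emph{Step 1: Construction of $\mathscr P$ in the disk.}  Writing $\alpha=\varphi^{-1}(a)$ and $\beta=\varphi^{-1}(b)$, the preimage $\varphi^{-1}(\mathcal C)$ is a boundary arc $[\alpha,\beta]\subset\partial\mathbb D$.  Since $q$ is at Euclidean distance $>\Delta$ from both $a$ and $b$, the images $\varphi^{-1}(B_\Delta(a))$ and $\varphi^{-1}(B_\Delta(b))$ are, by basic distortion/continuity of $\varphi^{-1}$, (open) subsets of $\mathbb D$ which cover fixed neighborhoods in $\mathbb D$ of $\alpha$ and $\beta$ respectively and which leave a positive gap away from $\vartheta=\varphi^{-1}(q)$.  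Since $\vartheta$ is close to $\varphi^{-1}(\mathcal C)$, I can choose a crosscut $\mathscr P\subset\mathbb D$ joining a point in $\varphi^{-1}(B_\Delta(a))$ to a point in $\varphi^{-1}(B_\Delta(b))$ such that $\vartheta$ lies in the bounded component of $\mathbb D\setminus\mathscr P$ which is adjacent to $[\alpha,\beta]$ (the ``bottom'' component).  Concretely one may take $\mathscr P$ to be a subarc of a circle in $\mathbb D$ parallel to and slightly farther from $[\alpha,\beta]$ than $\vartheta$, truncated where it enters $\varphi^{-1}(B_\Delta(a))$ and $\varphi^{-1}(B_\Delta(b))$.

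\smallskip
\noindent\emph{Step 2: Transport to $\Omega$.}  Away from $\alpha,\beta$ the map $\varphi$ is uniformly continuous on any closed set a positive distance from $\partial\mathbb D$, so $\varphi(\mathscr P)$ is a compact arc in $\Omega$ at a definite Euclidean distance $d_0>0$ from $\mathcal C$.  In $\Omega$, the set $\varphi(\mathscr P\cup B_\Delta(a)\cup B_\Delta(b))$ cuts a bottom piece $\mathcal B\subset\Omega$ whose closure meets $\partial\Omega$ only along $\mathcal C$ and along short arcs near $a,b$ inside $\varphi(B_\Delta(a))$ and $\varphi(B_\Delta(b))$; by construction $q\in\mathcal B$.

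\smallskip
\noindent\emph{Step 3: Choice of $\eta$.}  Set $\eta=\tfrac12\min(d_0,\delta^\star)$.  The sup-norm hypothesis $\mathrm{dist}(\mathcal C_n,\mathcal C)<\eta$ forces $\mathcal C_n$ to lie in an $\eta$-tube around $\mathcal C$, hence inside $\mathcal B$ except for short pieces within the $B_\Delta$-neighborhoods of $a$ and $b$.  By commensurability at the joining points $J_k^\e\to J_k$ and the well--organized property of Definition \ref{sup_defn}, any other component of $\partial\Omega_n$ in the vicinity of $\mathcal B$ must lie in the ``opposite'' boundary arcs, which converge in sup-norm to parts of $\partial\Omega$ outside of the $\eta$-tube around $\mathcal C$ (after possibly shrinking $\eta$ further); in particular, no other component of $\partial\Omega_n$ intrudes into $\mathcal B\setminus\varphi(B_\Delta(a)\cup B_\Delta(b))$.

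\smallskip
\noindent\emph{Step 4: Identifying the bottom component in $\Omega_n$ and concluding.}  Condition $(i_I)$ for sup--approximations (see the third bullet of Remark \ref{sup_remark}) gives $q\in\Omega_n$, and, since $\varphi(\mathscr P)$ is interior to $\Omega$ and compact, also $\varphi(\mathscr P)\subset\Omega_n$ for $n$ large.  Thus $\Omega_n\setminus\varphi(\mathscr P\cup B_\Delta(a)\cup B_\Delta(b))$ contains a component $\mathcal B_n$ with $q\in\mathcal B_n$, and by Step 3 the boundary of $\mathcal B_n$ relative to $\partial\Omega_n$ is entirely contained in $\mathcal C_n$.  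Consequently any walk emanating from $q$ which stays in $\mathcal B_n$ and hits $\partial\Omega_n$ must hit $\mathcal C_n$, as required.

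\smallskip
\noindent\emph{Main obstacle.}  The delicate point is Step 3: ensuring that the perturbation does not allow portions of the \emph{other} $\mathcal C^{(\ell)}_n$'s (or the opposite side of a double--sided slit) to leak into $\mathcal B$.  This is precisely what the well-organized property and the commensurability at the joining prime ends are designed to rule out; the argument must invoke the monochrome conclusion of Definition \ref{sup_defn} applied to the Jordan subdomain bounded by $\varphi(\mathscr P)$ together with short arcs near $a,b$, so that each approximating boundary arc sits in a single contiguous piece and the only one that can meet $\mathcal B$ is $\mathcal C_n$ itself.
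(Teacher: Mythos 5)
Your Steps 1 and 2 match the paper's Part 1 closely (build a crosscut $\mathscr P'$ in $\mathbb D$, transport to $\Omega$, note $\varphi(\mathscr P)$ lies a definite distance from $\mathcal C$), and Step 4's use of Proposition \ref{metric_cond} to place $q$ and $\varphi(\mathscr P)$ in $\Omega_n$ for large $n$ also agrees. But there is a genuine gap in Step 3, and it is exactly the case the lemma exists to handle.

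You claim that other components of $\partial\Omega_n$ near the bottom region ``converge in sup-norm to parts of $\partial\Omega$ outside of the $\eta$-tube around $\mathcal C$ (after possibly shrinking $\eta$ further).'' This is false precisely when a portion of $\mathcal C$ is one side of a two-sided slit $\mathscr J$: the opposite side of the slit is a \emph{different prime-end arc} but the \emph{same point set} in $\mathbb C$, so its approximant $\mathscr J_n$-opposite lives in the same $\eta$-tube around $\mathscr J$ as $\mathcal C_n$ does. Shrinking $\eta$ cannot separate them. So your conclusion that ``the boundary of $\mathcal B_n$ relative to $\partial\Omega_n$ is entirely contained in $\mathcal C_n$'' does not follow from the tube argument. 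Your closing ``Main obstacle'' paragraph correctly identifies the danger and gestures at the well-organized property, but gesturing at it is not the same as closing the gap: the monochrome condition in Definition \ref{sup_defn} concerns subdomains cut off by a path between two $\delta$-balls centred on a \emph{single} $[J_k^\e,J_{k+1}^\e]$, while your $\varphi(\mathscr P)$ runs between $B_\Delta(a)$ and $B_\Delta(b)$ and does not meet $\mathcal C_n$ at all, so the definition does not apply directly.

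What the paper does — and what is missing from your argument — is a refurbishing step: replace the Carath\'eodory balls $B_\Delta(a)$, $B_\Delta(b)$ by crosscut-defined neighborhoods $B(a)$, $B(b)$ in which $\mathscr J$ enters exactly once, then \emph{extend} $\mathscr P$ to a path $\mathscr P_r$ that terminates exactly at $a_n$ and $b_n$ while avoiding all of $\partial\Omega_n$. The existence of such an extension inside $B(a)$ is where the hypothesis $\eta\ll\operatorname{dist}(a,\partial B(a))$ is used: an obstruction by $\mathscr J_n$ or its opposite-side approximant would pin a subdomain around the tip $a_n$ inside $B(a)$ and violate sup-norm $\eta$-closeness (Figure \ref{top_consist2}). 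Only after $\mathscr P_r$ exists does the well-organized property yield that the principal component of $\Omega_n$ splits into exactly two Jordan pieces whose boundaries along $\partial\Omega_n$ are respectively $\mathcal C_n$ and its complement, so that the bottom component (which contains $q$ because $\mathscr P_r$ extends $\mathscr P$) touches only $\mathcal C_n$. This $\mathscr P_r$ construction is the technical heart of the proof; your proposal identifies the right tools but does not supply the step that makes them usable.
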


\begin{proof}
For clarity, we divide the proof into four parts.  

1.  We let $\eta \ll \Delta \ll 1$ and consider, under the uniformization map, the set 
\[\mathscr B := \mathbb D \setminus [\varphi^{-1}(B_\Delta(a)) \cup \varphi^{-1}(B_\Delta(b))].\]
Let us now draw a path $\mathscr P^\prime: \partial \varphi^{-1}(B_\Delta(a)) \leadsto \partial \varphi^{-1}(B_\Delta(b))$ which defines top and bottom components in $\mathscr B$ with $\omega$ in the bottom component, and hence also the bottom component of $\varphi(\mathscr B)\setminus \varphi(\mathscr P^\prime)$). Further, $\mathscr P := \varphi(\mathscr P^\prime)$ is some finite distance $\delta \gg \eta > 0$ away from $q$.  (In  essence, $\delta$ will now play the role of $\delta^\star$ in the statement of the Lemma.)

\begin{figure}
\begin{center}
    \scalebox{0.47}{\input{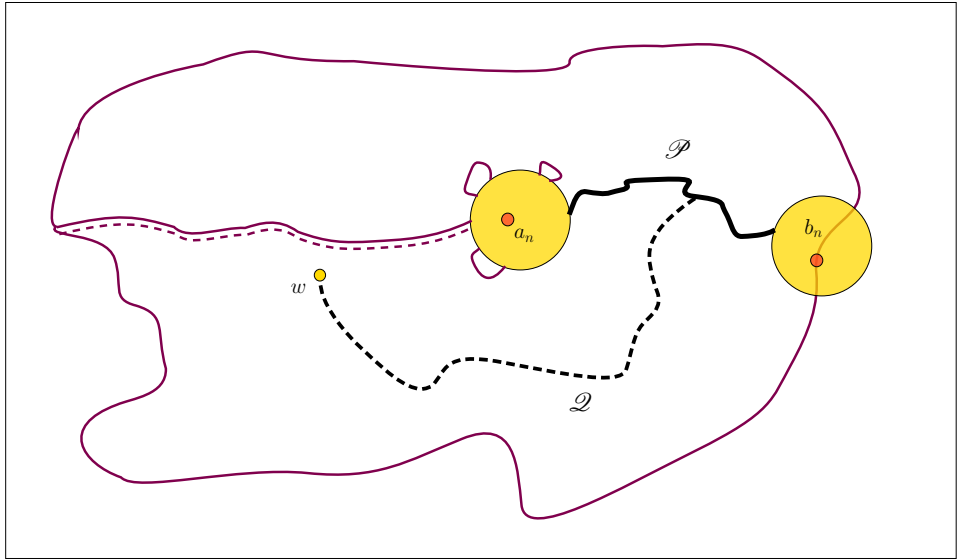}}
 \end{center}
 \caption{\footnotesize{The domain $\mathscr V_n$, etc.}}\label{top_consist1}
 \end{figure}

2.  We now look at the domain 
\[ \mathscr V_n = \Omega_n \setminus [B_\Delta(a) \cup B_\Delta(b) \cup \mathscr P].\]
We claim that for $n$ sufficiently large, all of the above is well--defined: Indeed, $\mathscr P$ is a compact set in $\Omega$ and hence for $n$ sufficiently large, is contained in $\Omega_n$, by Proposition \ref{metric_cond}.    Of course, $\Omega_n$ itself may have many components; we are focusing on the principal component.  Even so, with the above setup, $\mathscr V_n$ may also have many components, e.g., near the boundaries of $B_\Delta(a)$ and $B_\Delta(b)$ (see Figure \ref{top_consist1}).

However, we claim that it has the analogue of a top and bottom component: Indeed, it is clear that ``large'' compact sets in $\mathscr B$ well away form the boundaries continue to lie in large connected components of $\mathscr V_n$.  More quantitatively, while at the scale $\Delta$, $\partial \Omega_n$ may create various components by entering and re--entering $B_\Delta(a)$ and $B_\Delta(b)$, since $\mathcal C_n$ and $\mathcal C$ are $\eta$--close (say in the Hausdorff distance), if we shrink these neighborhood balls to scale $\Delta - 2\eta$, then such components merge into (the) two principal components, leaving only $\eta$--scale small components in the vicinity of the neighborhood balls.  

Finally, we claim that $q$ is in the bottom component of $\mathscr V_n$.  First, since $B_\delta(q)$ must all be in the same component of $\mathscr V_n$, $q$ cannot be in a small $\eta$--scale component.  
The argument can be finished by any number of means.  For example we may choose to regard $\mathscr P$ 
as two--sided; the component of $q$ is determined by which side of $\mathscr P$ it may be connected to.   For future reference, 
let $\mathscr Q^\prime \subset \mathbb D$ denote a simple path (staying well away from $\partial \mathscr B$) connecting
$\vartheta$ to $\mathscr P^\prime$ and 
$\mathscr Q$ the image of $\mathscr Q^\prime$ under
$\varphi$.  Then, again,  by Proposition \ref{metric_cond}, for all $n$ sufficiently large, the entirety of $\mathscr Q$ is found in $\Omega_n$ and the appropriate component -- bottom -- for $q$ is determined for once and all.  The relevant domains, etc., are illustrated in Figure \ref{top_consist1}.

%
%
%
%
%

3.  It is clear that $q$ is close to $\mathcal C_n$.  We further claim that it is not obstructed from $\mathcal C_n$ by other portions of $\partial \Omega_n$, as may be the worry when a portion of $\mathcal C$ is (one side of) a two--sided slit.  We need to divide into a few cases.  First if the only portion of $\mathcal C$ which is close to $q$ is approximated interiorly, then by an investigation of the situation under the uniformization map, it is clear that no obstruction is possible.  So now we suppose that $q$ is close to some Jordan arc $\mathscr J: = [J_k, J_{k+1}]$.  If $\mathscr J$ is one--sided, then there is no problem, since then $q$ is not close in anyway to any other portion of the boundary except near the endpoints, which we may assume, by shrinking relevant scales if necessary, that $q$ is far away from.  

4.  We are down to the main issue where $\mathscr J$ is a two--sided slit, which is being sup--norm approximated by $\mathscr J_n$.  Since at least one of the end points must be $a$ or $b$, let us assume without loss of generality that $J_k = a$.  We will need to do some refurbishing, starting with the neighborhood balls around $a$ (and $b$, if necessary).  Let $\mathfrak q$ be a point on $\mathscr J$ near $a$.  It is manifestly the case that $\mathfrak q$ has two images under $\varphi^{-1}$ -- which are near $\varphi^{-1}(a)$; consider a crosscut between these two images; the image of this crosscut under $\varphi$ then defines the relevant neighborhood, which we will denote by e.g., $B(a)$.  We note that i) by construction, $B(a)$ has the property that $\mathscr J$ enters exactly once and terminates at $a$, and ii) being slightly more careful if necessary to ensure the relevant crosscut is contained in $\varphi^{-1}(B_\Delta(a))$, we can also ensure that $B(a) \subset B_\Delta(a))$.  Here we will consider $\eta \ll \mbox{dist}(a, \partial B(a))$, so that in particular, e.g., $a_n \in B(a)$.

Now let us return attention to $\Omega_n$.   We will now refurbish $\mathscr P$ so that it directly joins $a_n$ to $b_n$ and avoids all of $\partial \Omega_n$; we will call the resultant path $\mathscr P_r$. We claim that it is possible to draw such a $\mathscr P_r$ by suitably extending $\mathscr P$, under the above stipulations concerning $B(a)$, $B(b)$, and $\eta$.  Focusing attention on $B(a)$, if this were not possible, then it must have been the case that a portion of $\mathscr J_n$ or a portion of $\partial \Omega_n \setminus \mathscr J_n$ which is approximating the other side of $\mathscr J$, is obstructing.  This scenario implies an inner domain inside $B(a)$ surrounding the tip $a_n$ with boundary e.g., $\mathscr J_n$.  Since $\eta \ll \mbox{dist}(a, \partial B(a))$, this violates sup--norm $\eta$ closeness.  (Here it appears that the sup--norm closeness property is crucial.  For an illustration see Figure \ref{top_consist2}.)  

\begin{figure}
 \vspace{-0.7cm}
 \hspace{0.2cm}
\begin{center}
    \scalebox{0.35}{\input{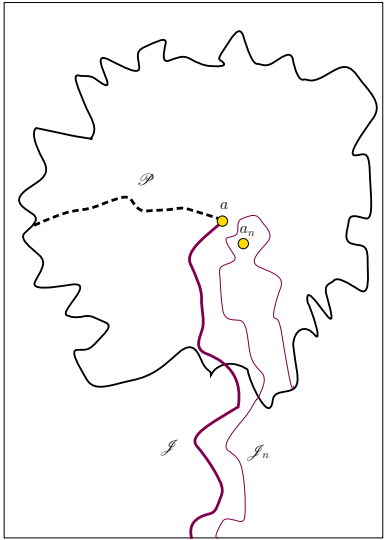}}
 \end{center}
\vspace{-0.2cm}
 \caption{\footnotesize{Failure to continue $\mathscr P$ to $\mathscr P_r$ inside $B(a)$.}}
\label{top_consist2}
 \end{figure}

Having achieved all this, it is again clear that the principal component of $\Omega_n$ is divided into two disjoint Jordan domains.  Indeed by the fact that the approximation is well--organized, there are two circuits -- both using $\mathscr P_r$, passing through $a_n$ and $b_n$, such that one (which again is the bottom one) contains $\mathcal C_n$ and the other contains (the principal component of) $\Omega_n \setminus \mathcal C_n$, with no possibility of mixing via crisscrossing.  Since $\mathscr P_r$ is an extension of $\mathscr P$, it is clear from the closing argument of 3) that $q$ is in the bottom component and hence must be closed to $\mathcal C_n$ without obstruction from any portion of $\partial \Omega_n \setminus \mathcal C_n$.  

\end{proof}

\section{Verification of Boundary Values for $u, v, w$}
We are now in a position to verify boundary values for $u, v, w$ using RSW estimates.  Let us begin with a more detailed recapitulation/clarification of how we take the scaling limit of $u_\e, v_\e, w_\e$ (see \cite{stas_perc} and \cite{cardy}).  Consider some exhaustion $K_n \nearrow \Omega$, with $K_n$ compact.  The RSW estimates imply equi--continuity, and hence we have $u_{\e_k}^{(n)} \rightarrow u^{(n)}$ uniformly on $K_n$ and, at least for the models in \cite{stas_perc} and \cite{cardy}, $$F^{(n)}:=u^{(n)} + e^{2\pi i/3}v^{(n)} + e^{-2\pi i/3} w^{(n)}$$ is analytic there with
$$ u^{(n)} + v^{(n)} + w^{(n)} = \mbox{const}.$$
We may take $(\e_k^{(n+1)}) \subset (\e_k^{(n)})$ as a subsequence which implies that 
$$
u_{\e_k}^{(n+1)} \rightarrow u^{(n+1)}
$$
etc., so that $F^{(n+1)}$ is analytic in $K_{n+1}$ with values agreeing with the old $F^{(n)}$ in $K_n$.  The diagonal sequence $(u_{\e_n}^{(n)})$ converges uniformly on compact sets to some $u$; together with similar statements for $v$ and $w$, we obtain that the limiting $F$ is analytic on $\Omega$.  In the sequel for simplicity we will drop the $(n)$ superscripts and e.g., simply denote $u_\e \rightarrow u$. 

We begin with a lemma which provides us with the RSW technology which is necessary for establishing boundary values.

\begin{lemma}\label{boundary_annuli}
Let $\Omega$ and $\varphi$ be as described. The pre--image of $\partial \Omega$ under $\varphi$ is divided into a finite number of disjoint (connected) closed arcs the intersection of any adjacent pair of which is the corresponding (pre--image of the) prime end.  Then for $z \in \partial \Omega \setminus \{a, b, c, \dots\}$, we identify $z$ with a single $\varphi^{-1}(z)$ and similarly identify its corresponding boundary component.

(I) There exists an infinite sequence of (``square'') neighborhoods $(S_\ell)$ centered at $z$ such that $S_\ell \cap \Omega \neq \emptyset$ for all $\ell$ and $S_{\ell+1}$ is strictly contained in $S_\ell$ with $\partial S_\ell$ containing portions of the boundary component containing $z$ and 

(II) In each $S_\ell \setminus S_{\ell+1}$, there is a ``yellow'' circuit and/or a ``blue'' circuit which separates $z$ from all other boundary components with probability that is uniformly positive as $\varepsilon \rightarrow 0$ (provided that $\varepsilon$ is sufficiently small depending on $\ell$).  By separation it is meant that in the pre--image in $\mathbb D$, $\zeta$ is separated from all other boundary components along any path in $\mathbb D$ whose image under $\varphi$ tends to $z$.

Finally, for $z \in \{a, b, c, \dots \}$, a similar statement holds, except for the fact that here the relevant circuits separate $z$ from all other boundary points and boundary components to which $z$ does not belong.
\end{lemma}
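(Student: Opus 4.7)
My plan is to combine the prime-end structure at $z$ under the uniformization map with the standard RSW technology for the percolation models of \cite{cardy}. Since $z \in \partial \Omega \setminus \{a,b,c,\ldots\}$, its preimage $\zeta = \varphi^{-1}(z)$ lies at a definite positive arc-length distance in $\partial \mathbb{D}$ from every marked prime end and from the preimages of all other boundary components. Thus there is an arc $I \subset \partial \mathbb{D}$ around $\zeta$ whose image under $\varphi$ lies entirely in the boundary component containing $z$, call it $\mathcal{A}$.

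The first step is to construct a shrinking family of crosscuts $\mathfrak{c}_\ell$ in $\mathbb{D}$ with both endpoints on $I$ and with $\mathrm{diam}(\mathfrak{c}_\ell) \to 0$, such that the subdomain $D_\ell \subset \mathbb{D}$ cut off by $\mathfrak{c}_\ell$ and the sub-arc of $I$ between its endpoints has $\zeta$ on its boundary, with $D_{\ell+1} \subsetneq D_\ell$. By continuity of $\varphi$ in the prime-end sense, $\mathrm{diam}(\varphi(D_\ell)) \to 0$, so one can inscribe each $\overline{\varphi(D_\ell)}$ in a Euclidean ``square'' $S_\ell$ centered at $z$ whose side-lengths shrink geometrically, with $S_{\ell+1} \subsetneq S_\ell$ and with $\partial S_\ell$ meeting $\mathcal{A}$ near $z$ and crossing $\varphi(\mathfrak{c}_\ell)$. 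The squares need not be axis-aligned; I orient them so that $\partial S_\ell$ hits $\mathcal{A}$ at two nearby points bracketing $z$, which is possible at every scale $\ell$ since $\mathcal{A}$ approaches $z$ from the $\zeta$-direction.

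Next I apply RSW inside $\varphi(D_\ell \setminus \overline{D_{\ell+1}}) \cap (S_\ell \setminus S_{\ell+1})$. At the cost of possibly inserting an extra conformal scale between $\ell$ and $\ell+1$, this region contains a topological annulus in $\Omega$ of bounded conformal modulus with one boundary arc on $\mathcal{A}$ near $z$ and the other well away from $z$. The RSW arguments for the models of \cite{stas_perc} and \cite{cardy} then produce both a monochrome yellow and a monochrome blue circuit in this region with probability bounded below by a positive constant depending only on $\ell$, uniformly in $\varepsilon$ small. Such a circuit, living in $\varphi(D_\ell \setminus \overline{D_{\ell+1}})$, has preimage a Jordan curve in $D_\ell \setminus \overline{D_{\ell+1}}$ that together with the sub-arcs of $I$ bordering $D_{\ell+1}$ encloses $\zeta$ and disconnects it from $\partial \mathbb{D} \setminus I$. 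Any path in $\mathbb{D}$ whose $\varphi$-image tends to $z$ must eventually lie in every neighborhood of $\zeta$ and therefore cross this preimage, which is precisely the required separation from all other boundary components.

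For the final clause, where $z \in \{a,b,c,\ldots\}$, the same scheme works with $\zeta$ at the junction of two boundary arcs: the crosscuts $\mathfrak{c}_\ell$ are taken with one endpoint on each of the two arcs meeting at $\zeta$, and the argument proceeds identically, noting that the resulting circuit separates $\zeta$ from every point of $\partial \mathbb{D}$ outside a small neighborhood of $\zeta$ on both sides, hence from every other boundary component and every other boundary point. The main obstacle is the step of matching Euclidean squares with conformal neighborhoods: in geometries such as a slit approaching $z$ from the opposite prime-end direction, a careless choice of $S_\ell$ could pick up pieces of $\partial \Omega$ from another component lying inside $S_\ell$ in the Euclidean sense. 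The cure is that the circuits are constructed inside $\varphi(D_\ell \setminus \overline{D_{\ell+1}})$ rather than in all of $S_\ell \setminus S_{\ell+1}$, so the separation is intrinsically phrased in terms of the prime-end structure and is immune to such Euclidean intruders; away from this geometric bookkeeping the lemma is a direct consequence of RSW.
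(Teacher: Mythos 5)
Your approach differs from the paper's and contains a topological error that makes the separation argument fail as written.

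The paper's proof is structurally simpler: it fixes a \emph{single} crosscut $\Gamma$ of $\mathbb D$ near $\zeta$, with endpoints strictly interior to $\varphi^{-1}(\mathcal A)$, and lets $Q := \varphi(\text{region cut off by }\Gamma)$. It then chooses one Euclidean square $S_0$ centered at $z$ with $S_0\cap\Omega\subset Q$, and sets $S_{\ell+1}$ to be $S_\ell$ shrunk by a factor of two, so all the $S_\ell\setminus S_{\ell+1}$ are plain Euclidean annuli. RSW is then applied in these \emph{full-plane} annuli to obtain a Harris ring: a monochrome \emph{closed} circuit in $\mathbb C_\e$, generally exiting and re-entering $\Omega$. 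The separation of $\zeta$ is achieved by the portion of the ring inside $\Omega$ together with $\partial\Omega$; a path in $\mathbb D$ from $\zeta$ to beyond $\varphi^{-1}(\Gamma)$ has image in $\Omega$ that starts inside $S_{\ell+1}$ and exits $S_\ell$, so it must hit the ring.

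Your construction instead produces a \emph{shrinking} family of crosscuts $\mathfrak c_\ell$ and domains $D_\ell$ and asserts that $\varphi(D_\ell\setminus\overline{D_{\ell+1}})\cap(S_\ell\setminus S_{\ell+1})$ contains ``a topological annulus in $\Omega$ of bounded conformal modulus.'' Two problems arise. First, that modulus bound is not justified: $\varphi(D_\ell)$ may be an extremely elongated region, so the square $S_\ell$ inscribing $\overline{\varphi(D_\ell)}$ can be vastly larger than $\varphi(D_{\ell+1})$, and neither the intersection with $S_\ell\setminus S_{\ell+1}$ nor the device of ``inserting extra conformal scales'' is shown to restore uniformly bounded modulus; this is exactly the kind of claim that requires an argument and that the paper's Euclidean-halving construction is designed to avoid. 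Second, and more seriously, a circuit whose trace lies entirely inside $\Omega$ pulls back under $\varphi^{-1}$ to a Jordan curve lying strictly in the open disk $\mathbb D$; its interior is a compactly contained region of $\mathbb D$, which cannot contain the boundary point $\zeta\in\partial\mathbb D$. Appending ``sub-arcs of $I$'' does not help: the Jordan curve and $\partial\mathbb D$ are disjoint closed sets, so their union does not separate $\overline{\mathbb D}$. What does separate $\zeta$ is the \emph{arc} of a full-plane Harris ring lying inside $\Omega$ -- a crosscut of $\Omega$ whose preimage is a crosscut of $\mathbb D$ with both endpoints on $I$ -- together with the subtended arc of $\partial\mathbb D$. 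This is precisely what the paper's ``circuit in $S_\ell\setminus S_{\ell+1}$'' gives you, and it is lost when you insist the circuit stays in $\Omega$. To repair your argument you would either need to pass to full-plane Harris rings as the paper does, or replace your ``annulus/circuit'' language with a quadrilateral/crossing (monochrome arc landing on $\mathcal A$ at both ends) and prove the modulus bound.
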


\begin{proof}
Let $z \in \partial \Omega$ and let $\zeta = \varphi^{-1}(z)$ denote its corresponding pre--image.  First suppose $z \notin \{a, b, c, \dots\}$ so that $\zeta$ is some finite distance from the corresponding points on $\partial \mathbb D$.  Next we consider a sufficiently small crosscut $\Gamma$ of $\mathbb D$ surrounding $\zeta$ (a finite distance away from $\zeta$) whose end points on $\partial \mathbb D$, denoted $\alpha$ and $\beta$, are such that $\alpha$ and $ \beta$ are in the (interior of the) boundary component of $\zeta$.  We also denote by $Q$ the image of the interior of the region bounded by $\Gamma$ and the relevant portion of $\partial \mathbb D$; we note that $z \in \partial Q$.  Let $S_0 \subset \mathbb C$ be a small square centered at $z$ whose intersection with $\Omega$ lies entirely inside $Q$.  Then by construction, $\partial (S_0 \cap \Omega)$ can contain at most boundary pieces from the boundary component of $z$.  Now the sequence $S_n$ will be constructed similarly, with the stipulation that the linear scale of $S_{\ell+1}$ is reduced by half. 

By standard RSW estimates for the percolation problem in all of $\mathbb C$, there is a blue and/or yellow Harris ring inside each annulus $S_\ell \setminus S_{\ell+1}$ with probability uniformly bounded from below for $\varepsilon$ sufficiently small (depending on $n$).  Now consider any path $\mathscr P$ in $\mathbb D$ which originates at $\zeta$ and ends outside $\varphi^{-1}(Q)$ such that the image of the path originates at $z$.  Such a path stays in $\Omega$ and therefore must intersect the said circuit.  

Identical arguments hold for $z \in \{a, b, c, \dots\}$ except for the fact that the original crosscut will now originate and end on two distinct boundary components.  
\end{proof}

It is noted that in the presence of a circuit in $S_\ell \setminus S_{\ell+1}$, the above separation argument also applies to points in $\partial (S_m \cap Q)$ if $m \geq \ell + 1$.

\begin{lemma}[Establishment of Boundary Values for Interior Approximations]\label{int_bound_value} Let $\Omega$ and $\varphi$ be as described.  We recall that $u_\e^B(z)$ is the probability at the $\e$ level that there is a blue crossing from $\mathcal A_\e$ to $\mathcal B_\e$, separating $z$ from $\mathcal C_\e$, and let $u$ denote the limiting function.  Then $u= 0$ on $\mathcal C$ in the sense that if $z_k \rightarrow z \in \mathcal C$ in such a way that $\varphi^{-1}(z_k) = \zeta_k \rightarrow \zeta \in \varphi^{-1}(\mathcal C)$, then $\lim_{k \rightarrow \infty} u(z_k) = 0$.  Similarly, in the vicinity of the point $c$, $u$ tends to one.  Analogous statements hold for $v^B_\varepsilon$ and $w^B_\varepsilon$ and for the yellow versions of these functions.
\end{lemma}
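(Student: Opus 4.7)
The plan is to use Lemma \ref{boundary_annuli} together with condition (iii) of Definition \ref{int_app} to produce many independent yellow half-circuits around $z$ in nested annuli, each of which precludes a blue separating crossing, and then to pass to the limit. I fix $z \in \mathcal C$ (whose pre-image $\zeta$ is then in the relative interior of the $\mathcal C$-arc of $\partial \mathbb D$) and apply Lemma \ref{boundary_annuli} to obtain the nested squares $(S_\ell)$. For $\ell$ sufficiently large, $\partial S_\ell \cap \partial \Omega$ meets only $\mathcal C$; condition (iii) of Definition \ref{int_app} then forces, for $\e$ small enough, $\partial S_\ell \cap \partial \Omega_\e \subset \mathcal C_\e$. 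Discarding the first few annuli, I henceforth take this for granted.

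The key topological observation is that a yellow crossing across the annulus $A_\ell := S_\ell \setminus S_{\ell+1}$ with both endpoints on $\mathcal C_\e$ and surrounding $z$ (a ``yellow half-circuit'') cuts off a Jordan sub-domain $D_\ell$ of $\Omega_\e$ bounded by this arc and a piece of $\mathcal C_\e$, with $z \in D_\ell$. Any blue path from $\mathcal A_\e$ to $\mathcal B_\e$ must avoid the yellow arc (opposite colors) and cannot use the $\mathcal C_\e$-piece of $\partial D_\ell$, so it lies in $\Omega_\e \setminus D_\ell$. But then $z$ remains connected inside $D_\ell$ to the $\mathcal C_\e$-piece of $\partial D_\ell$ in the complement of the blue path, so no separation has occurred, and $u_\e^B(z) = 0$ on this event.

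By Lemma \ref{boundary_annuli}(II) and standard RSW, this yellow half-circuit event has probability at least some $p > 0$ uniformly in $\e$ small. The events in the disjoint annuli $A_1, \ldots, A_N$ depend on disjoint regions and are therefore independent (or positively correlated via FKG, which suffices), so
\[
u_\e^B(z) \leq (1-p)^N.
\]
Passing to the limit along the extraction defining $u$ gives $u(z) \leq (1-p)^N$, and $N$ being arbitrary, $u(z) = 0$. For $z_k \to z$ with $\zeta_k \to \zeta$, the point $z_k$ eventually lies in some $S_{N(k)}$ with $N(k) \to \infty$, and the same yellow half-circuits in the annuli outside $S_{N(k)}$ still surround $z_k$; hence $u(z_k) \leq (1-p)^{N(k)-1} \to 0$. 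The case $u \to 1$ near $c$ is dual: in nested annuli around $c$, RSW produces blue arcs from $\mathcal A_\e$ to $\mathcal B_\e$ which, together with the boundary segments of $\mathcal A_\e$ and $\mathcal B_\e$ running up to $c$, form separating blue crossings for every $z$ trapped between the arc and $c$, yielding $u_\e^B(z) \geq 1 - (1-p)^N$.

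The main technical obstacle is ensuring that the yellow arc genuinely has both endpoints on the intended discrete component $\mathcal C_\e$, rather than on a neighboring $\mathcal A_\e$ or $\mathcal B_\e$ segment that has drifted close to $z$ under a poor approximation. This is exactly what condition (iii) of Definition \ref{int_app} buys: it rules out the scenario of Figure \ref{bad1}, where a mislabelled or pinched approximate boundary would allow a blue path to slip around the yellow arc along a spurious $\mathcal A_\e$ or $\mathcal B_\e$ stub and effect a separation despite the presence of the yellow half-circuit.
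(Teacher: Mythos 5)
Your proof follows essentially the same route as the paper's: the nested squares $(S_\ell)$ of Lemma \ref{boundary_annuli}, condition (iii) of Definition \ref{int_app} to keep $\mathcal A_\e$ and $\mathcal B_\e$ out of the small squares, yellow arcs landing on $\mathcal C_\e$ to block the blue separating crossing (your Jordan-domain argument for this step is a welcome, more explicit rendering of what the paper only indicates by ``look under $\varphi^{-1}$''), and an RSW-type estimate to drive the probability of no yellow circuit to zero.

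One caveat on the step producing the bound $(1-p)^N$: the parenthetical ``(or positively correlated via FKG, which suffices)'' is wrong in direction. Writing $E_\ell$ for ``yellow circuit in $A_\ell$,'' the complements $E_\ell^c$ are decreasing events, so FKG yields $\Pr\bigl[\bigcap_\ell E_\ell^c\bigr] \geq \prod_\ell \Pr[E_\ell^c]$ -- a \emph{lower} bound on the undesired event, not the upper bound $(1-p)^N$ you need. Independence of disjoint-region events does give $(1-p)^N$ for the triangular site model, but the bond--triangular models of \cite{cardy} are correlated (not product measures), so there the needed assertion -- that the probability of finding no circuit over $m$ nested annuli vanishes as $m \to \infty$ -- is itself part of the RSW technology established in the references; this is precisely what the paper leans on when it simply asserts that $p(m) \to 1$ rather than arguing via independence.
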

\begin{proof}
Suppose a yellow Harris circuit has occurred in $S_\ell \setminus S_{\ell+1}$ and let $z_k \rightarrow z$ as described.  Then, in the language of the proof of Lemma \ref{boundary_annuli}, for $k$ sufficiently large $z_k \in Q \cap S_m$ for some $m = m(k)$ tending to $\infty$ as $k \rightarrow \infty$.  For $\e$ sufficiently small, it follows from (iii) in Definition \ref{int_app} that $\mathcal A_\e$ and $\mathcal B_\e$ are disjoint from $(S_m \cap Q \cap \Omega_\e^\bullet)$ and since $\Omega_\e^\bullet$ is an interior approximation, the relevant portion of the circuit evidently joins with $\partial \mathcal C_\e$ to separate $\partial S_m \cap Q \cap \Omega_\e^\bullet$ from $c$, as for $z$ as discussed near the end of the proof of Lemma \ref{boundary_annuli}.  This separation would preclude the crossing event corresponding to $u_\e^B(z_k)$ since -- as is clear if we look on the unit disc via the conformal map $\varphi^{-1}$ -- the latter necessitates (two) blue connections between the relevant portions of $\partial S_m$ and other boundaries.  Now consider $k$ with $m(k)$ very large; then for all $\e$ sufficiently small, the probability of at least one yellow circuit is, uniformly (in $\e$), close to some $p(m)$ where $p(m) \rightarrow 1$ as $m \rightarrow \infty$.  It therefore follows that $u(z_k) \leq 1-p(m(k)) \rightarrow 0$ as $z_k \rightarrow z$.  Finally, boundary value of $c$ follows the same argument: Here the blue Harris ring events accomplish the required connection between $\mathcal A_\e$ and $\mathcal B_\e$.  Arguments for other functions/boundaries are identical.

\end{proof}

\begin{lemma}
[Establishment of Boundary Values for Sup--approximations]\label{sup_bound_value} Let $\Omega$ and $\varphi$ be as described. We recall that $u_\e^B(z)$ is the probability at the $\e$ level that there is a blue crossing from $\mathcal A_\e$ to $\mathcal B_\e$, separating $z$ from $\mathcal C_\e$, and let $u$ denote the limiting function.  Then $u= 0$ on $\mathcal C$ in the sense that if $z_k \rightarrow z \in \mathcal C$ in such a way that $\varphi^{-1}(z_k) = \zeta_k \rightarrow \zeta \in \varphi^{-1}(\mathcal C)$, then $\lim_{\e \rightarrow 0} u_\e^B(z_k) \rightarrow 0$.  Similarly, in the vicinity of the point $c$, $u$ tends to one.  Analogous statements hold for $v^B_\varepsilon$ and $w^B_\varepsilon$.
\end{lemma}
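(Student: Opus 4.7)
The plan is to follow the template of Lemma \ref{int_bound_value}, with Lemma \ref{top_consist} doing the heavy lifting at the one point where the interior-approximation argument (which relied on viewing everything through the single conformal map $\varphi^{-1}$) fails. Let $z_k\to z\in\mathcal C$ with $\zeta_k=\varphi^{-1}(z_k)\to\zeta\in\varphi^{-1}(\mathcal C)$, and consider first the case $z\notin\{a,b,c\}$. Apply Lemma \ref{boundary_annuli} to obtain the nested squares $S_\ell$ centered at $z$ supporting yellow Harris circuits in each annulus $S_\ell\setminus S_{\ell+1}$ with probability tending to $1$ as the number of available scales increases. Because $\zeta_k\to\zeta$, for $k$ large $z_k$ lies in $Q\cap S_{m(k)}$ with $m(k)\to\infty$.

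The new ingredient is to argue that in the \emph{sup}--approximation the yellow Harris circuit, together with $\mathcal C_\e$, still topologically separates $z_k$ from $\mathcal A_\e\cup\mathcal B_\e$. In the interior case this was immediate from viewing everything under $\varphi^{-1}$; here, $\mathcal C$ may be one side of a two--sided slit, so a priori a Euclidean--small Harris ring around $z$ could ``see'' the wrong side of $\partial\Omega_\e$. To rule this out, invoke Lemma \ref{top_consist} with $\delta^\star$ chosen so that $S_{m(k)}$ sits strictly inside the $\delta^\star$--neighborhood of $z$, and $\Delta$ fixed so that $z$ is far from $a,b$: for $\e$ small enough (so $\mathrm{dist}(\mathcal C_\e,\mathcal C)<\eta$), there is a path $\mathscr P$ from near $a$ to near $b$ such that $z_k$ lies in the bottom component of $\Omega_\e\setminus(\mathscr P\cup B_\Delta(a)\cup B_\Delta(b))$, and any walk from $z_k$ in that bottom component which hits $\partial\Omega_\e$ must hit $\mathcal C_\e$. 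Consequently the yellow Harris ring, together with the $\mathcal C_\e$--arc it joins onto, forms a Jordan barrier isolating $z_k$ from $\mathcal A_\e\cup\mathcal B_\e$, thereby precluding the blue crossing event defining $u_\e^B(z_k)$. Bounding $u_\e^B(z_k)\le 1-p(m(k))$ and sending $k\to\infty$ yields the conclusion.

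For the boundary value at $c$, I would run the symmetric argument: nested squares around $c$ together with \emph{blue} Harris rings (now combined with the nearby $\mathcal A_\e$ and $\mathcal B_\e$ arcs) produce a blue $\mathcal A_\e\leadsto\mathcal B_\e$ connection separating $z_k$ from $\mathcal C_\e$, forcing $u_\e^B(z_k)\to 1$. The role of Lemma \ref{top_consist} here is dual — applied this time to each of the adjacent boundary pieces $[a,c]$ and $[b,c]$ — to guarantee that the Harris circuit indeed closes up onto $\mathcal A_\e$ and $\mathcal B_\e$ rather than being masked by crisscrossing of a slit approximation. The well--organized property (Definition \ref{sup_defn}) is precisely what ensures the two halves of the ring attach to the correct labeled boundary pieces.

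The main obstacle is conceptual rather than computational: in the interior--approximation proof the key topological separation came ``for free'' from $\Omega_\e\subset\Omega$ and the single uniformization map, whereas here one must identify exactly \emph{why} the Harris ring plus $\mathcal C_\e$ is a Jordan barrier for $z_k$ in the relevant component of $\Omega_\e$. This is solved by matching scales so that Lemma \ref{top_consist} applies — $\eta\ll\delta^\star\sim\mathrm{side}(S_{m(k)})\ll\Delta$ — and then noting that all further arguments are identical to those of Lemma \ref{int_bound_value}.
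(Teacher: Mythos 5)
Your proposal is essentially correct and takes the same route as the paper: invoke Lemma~\ref{top_consist} (with $q=z_k$) to guarantee that a yellow Harris ring around $z_k$ joins onto $\mathcal C_\e$ rather than onto $\mathcal A_\e$ or $\mathcal B_\e$ (the precise thing that could go wrong near a two--sided slit under sup--approximation), and then run the Lemma~\ref{boundary_annuli}/Lemma~\ref{int_bound_value} argument. You correctly identify well--organization as what keeps the attaching of the ring unambiguous.

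One point the paper makes explicit which your final step glosses over: since $\mathcal C_\e$ is only guaranteed $\eta(\e)$--close to $\mathcal C$, for a \emph{fixed} $\e_0$ only the first $M(\e_0)$ annuli $S_\ell\setminus S_{\ell+1}$ are usable (rings of side $\ll\eta(\e_0)$ need not touch $\mathcal C_{\e_0}$); the paper then states the \emph{uniform-in-$\e\le\e_0$} bound $u_\e(z_k)\le 1-p(M(\e_0))$ and uses $M(\e_0)\to\infty$ as $\e_0\to0$. Your inequality $u_\e^B(z_k)\le 1-p(m(k))$ is only valid for $\e$ small enough that $M(\e)\ge m(k)$, so the limits must be taken in the order $\e\to0$ first then $k\to\infty$; you signal the requisite scale matching ($\eta\ll\delta^\star\ll\Delta$) but it would be worth making this quantifier order explicit, since this is exactly where the sup--approximation case differs from the interior one (and the uniform version is what Remark~\ref{cont_to_boundary} subsequently relies on).

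Your sketch for the boundary value at $c$ (blue Harris rings joining $\mathcal A_\e$ to $\mathcal B_\e$) is consistent with what the paper leaves as ``similarly,'' though note that applying Lemma~\ref{top_consist} verbatim there requires the adaptation you indicate since $c$ sits at the junction of two boundary pieces rather than in the interior of one.
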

\begin{proof}
We recall that we have three boundary pieces $\mathcal A, \mathcal B, \mathcal C$ in counterclockwise order, where we assume without loss of generality that $\mathcal C$ is on the bottom.  We also label the relevant marked prime ends $a, b, c$, in counterclockwise order, such that e.g., $c$ is opposite to $\mathcal C$.  Thus if we draw a path $\mathscr P$ between $a_\e$ and $b_\e$ inside $\Omega_\e$, and $z_k \in \Omega_\e$ is inside the region formed by $\mathscr P$ and $\mathcal C_\e$ then to prevent events which contribute to $u^B_\e$, it is sufficient to seal $z_k$ off from $c_\e$ by a yellow Harris ring together with the bottom boundary $\mathcal C_\e$.  This is precisely the setting of Lemma \ref{top_consist} (with $w = z_k$) and so we conclude that for $k$ sufficiently large, for $\e$ sufficiently small depending on $k$, in order to prevent events contributing to $u_\e^B$, it is indeed sufficient to seal $z_k$ off with a yellow Harris ring.

We are now in a position to invoke Lemma \ref{boundary_annuli}.  The proof follows closely as in the last part of the proof of Lemma \ref{int_bound_value} except for one difference:  For $k$ sufficiently large, $z_k \in S_m$ for some $m=m(k)$ which increases as $k$ increases; however, in the case of sup--approximation, it is no longer quite so automatic that arbitrarily small Harris rings will hit the boundary $\mathcal C_\e$. However, given $\e$, we have that $\mathcal C_\e$ is at most a distance $\eta(\e)$ from $\mathcal C$, and thus, for fixed $\e_0$, there is some $M(\e_0)$ such that $m(k) \nearrow M(\e_0)$ as $k \rightarrow \infty$ (and $M(\e_0)) \rightarrow \infty$ as $\e_0 \rightarrow 0$).  So we still have that uniformly for all $\e \leq \e_0$, $U_\e(z) \leq 1 - p(M(\e_0))$, where $p(M(\e_0))$ as before denotes the probability of at least one yellow Harris ring in the annulus $S_1 \setminus S_{M(\e_0)}$, and tends to 1 as $M(\e_0)$ tends to infinity. 

%

\end{proof}

\begin{remark}\label{cont_to_boundary}
Our arguments in fact show that the function $u$ is continuous up to the boundary: Given any sequence $z_k \rightarrow z \in \mathcal C$, we have that given any $\kappa > 0$, for $k$ sufficiently large, $|u_{\e_n}^{(n)}(z_k)|< \kappa$, uniformly in $n$, for $n$ sufficiently large (or $\e$ sufficiently small) and hence $u(z_k) < \kappa$ (c.f., the end of the proof of Theorem \ref{bv_prop}).  We have similar statements for $v$ and $w$ on the corresponding boundaries.  
\end{remark}

To check that $F$ is indeed the appropriate conformal map and thereby uniquely determine it and retrieve Cardy's Formula, we follow the arguments in \cite{beffara}.  We remark that while there exists certain literature on discrete complex analysis (see e.g., \cite{ferrand} and \cite{stas_chelkak} and references therein) our situation is less straightforward since e.g., none of the functions $u_N, v_N, w_N$ are actually discrete harmonic.  Moreover, due to the fact that we are considering general domains (versus Jordan domains) and $\partial \Omega$ may not be so well--behaved, to obtain conformality requires some extra work.  In any case, we will now amalgamate all ingredients to prove the following result:

\begin{thm}\label{bv_prop}
For the models described in \cite{cardy} (which includes the triangular site problem studied in \cite{stas_perc}), consider the function $F = u + e^{2\pi i/3} v + e^{-2\pi i /3} w$, where $u, v, w$ are the limits of $u_\e, v_\e, w_\e$.  Then $F$ is the unique conformal map between $\Omega$ and the equilateral triangle $\mathbf T$ with vertices at 1, $e^{2\pi i /3}$, $e^{-2\pi i / 3}$.
\end{thm}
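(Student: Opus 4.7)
My plan is to work on the unit disc via the uniformization $\varphi\colon\mathbb{D}\to\Omega$ and identify $\widetilde F:=F\circ\varphi$ as the Riemann map $\mathbb{D}\to\mathbf{T}$ by an argument-principle count. The ingredients are all in place: analyticity of $F$ on $\Omega$ and the normalization $u+v+w\equiv 1$ (from the recapitulation preceding Lemma~\ref{boundary_annuli}); the pointwise boundary values $u\to 0$ on $\mathcal{C}$ and $u\to 1$ near $c$, together with cyclic statements for $v,w$ (Lemmas~\ref{int_bound_value} and \ref{sup_bound_value}); and continuity up to $\partial\Omega$ (Remark~\ref{cont_to_boundary}). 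Uniqueness of the conformal map is automatic from the induced three-point boundary correspondence.

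First I would analyze the boundary trace of $\widetilde F$. By continuity $\widetilde F$ extends to $\overline{\mathbb{D}}$. On $\varphi^{-1}(\mathcal{C})$ one has $\widetilde u\equiv 0$ and $\widetilde v+\widetilde w\equiv 1$, so
\[
\widetilde F|_{\varphi^{-1}(\mathcal{C})}=e^{2\pi i/3}\widetilde v+e^{-2\pi i/3}(1-\widetilde v),
\]
which lies on the chord from $e^{2\pi i/3}$ to $e^{-2\pi i/3}$ (the side of $\mathbf{T}$ opposite $1$); cyclic reasoning places the other two arc-images on the other two sides. At each of $\varphi^{-1}(a),\varphi^{-1}(b),\varphi^{-1}(c)$ two of $\widetilde u,\widetilde v,\widetilde w$ vanish simultaneously (by combining the boundary-value lemmas applied from each adjoining arc with the ``vicinity of $c$'' clause), so $\widetilde F$ takes the three distinct vertices of $\mathbf{T}$ at these points in the correct cyclic order. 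The maximum principle applied to $\widetilde F$ against the three closed half-planes whose intersection is $\overline{\mathbf{T}}$ then forces $\widetilde F(\overline{\mathbb{D}})\subseteq\overline{\mathbf{T}}$, and the open-mapping theorem upgrades this to $\widetilde F(\mathbb{D})\subseteq\mathbf{T}$.

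To conclude, I would invoke the argument principle: for any $w_0\in\mathbf{T}$ the number of pre-images of $w_0$ under $\widetilde F$, counted with multiplicity, equals $\mathrm{wind}(\widetilde F|_{\partial\mathbb{D}},w_0)$. Once monotonicity of $\widetilde F$ along each arc is established, the boundary trace is a simple loop along $\partial\mathbf{T}$ in the correct counterclockwise direction, with winding $1$ about every interior point; hence $\widetilde F\colon\mathbb{D}\to\mathbf{T}$ is a conformal bijection and $F=\widetilde F\circ\varphi^{-1}$ is the required unique conformal map $\Omega\to\mathbf{T}$.

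The main obstacle is the monotonicity claim: the boundary-value lemmas place $\widetilde F$ on the correct chord but a priori say nothing about its direction of traversal, so the winding could in principle be any positive odd integer arising from a back-and-forth traversal. I would settle this by combining analyticity of $\widetilde F$ with the Hopf boundary-point principle. Since $\widetilde u\ge 0$ on $\overline{\mathbb{D}}$ as a limit of probabilities and $\widetilde u\equiv 0$ on $\varphi^{-1}(\mathcal{C})$, Schwarz reflection extends $\widetilde u$ harmonically across that arc, and non-constancy (guaranteed by $\widetilde u(\varphi^{-1}(c))=1$) gives a strictly negative outward normal derivative of $\widetilde u$ at every interior point of the arc. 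The relation $\widetilde u=\mathrm{Re}\bigl(\tfrac{1}{3}(2\widetilde F+1)\bigr)$, which follows algebraically from $\widetilde u+\widetilde v+\widetilde w\equiv 1$, converts this via Cauchy--Riemann into a non-vanishing tangential derivative of $\mathrm{Im}\,\widetilde F$ along the arc (where $\mathrm{Re}\,\widetilde F\equiv -\tfrac12$ is constant), i.e., into strict monotonicity of $\widetilde F$ along the chord. The cyclic versions then close the monotonicity argument on all three arcs.
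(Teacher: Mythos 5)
Your proof is correct and, at the architectural level, follows the same route as the paper's: establish analyticity of $F$, the normalization $u+v+w\equiv 1$, continuity up to $\partial\Omega$, the side-wise boundary values, and then conclude by a boundary-correspondence argument (the paper lists seven conditions and cites a theorem of Lang; you invoke the argument principle directly -- essentially the same step). Where you genuinely differ, and tighten the presentation, is in the injectivity of the boundary trace: the paper's condition 6 asserts that $F\circ\varphi$ maps $\partial\mathbb D$ \emph{bijectively} onto $\partial\mathbf T$, but its verification only exhibits continuity and surjectivity via the intermediate value theorem, leaving strict monotonicity of $u\circ\varphi$ along each arc unaddressed (without it the winding of the boundary trace could in principle exceed one). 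Your Schwarz-reflection-plus-Hopf argument is a correct, self-contained way to close this: $\widetilde u\geq 0$ is harmonic, nonconstant, and vanishes continuously on the open arc $\varphi^{-1}(\mathcal C)$, so it (and hence $\widetilde F$) extends analytically across that arc by odd reflection; the Hopf lemma then yields a strictly negative normal derivative, and Cauchy--Riemann converts that into strict tangential monotonicity of $\mathrm{Im}\,\widetilde F$. You also replace the paper's percolation argument for $(F\circ\varphi)(\mathbb D)\cap\partial\mathbf T=\emptyset$ (constructing a tube of bounded conformal modulus and invoking ACCFR to show $u>0$ inside $\Omega$) with a maximum/minimum-principle argument using $0\le u,v,w$ and $u+v+w\equiv 1$ together with the open mapping theorem; this is more elementary and equally valid, though the paper's percolation route has the (here unneeded) advantage of not presupposing harmonicity.
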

\begin{proof}
We claim that the following seven conditions hold:
\begin{enumerate}
\item $F$ is nonconstant and analytic in $\Omega$,
\item $u, v, w$ (and hence $F$) can be continued (continuously) to $\partial \Omega$,
\item $u + v + w$ is a constant,
\item $u(c) = 1$, with similar statements for $v$ and $w$ at $a$ and $b$,
\item $u \equiv 0$ on $\mathcal C$ with similar statements for $v$ and $w$ on $\mathcal A$ and $\mathcal B$, 
\item $F\circ \varphi$ maps $\partial \mathbb D$ bijectively onto $\partial \mathbf T$,
\item $(F \circ \varphi) (\mathbb D) \cap (F\circ \varphi) (\partial \mathbb D) = \emptyset$;
\end{enumerate}
from which the proposition follows immediately.  Indeed, from conditions 7 and 6, $F \circ \varphi: \mathbb D \rightarrow \mathbf T$ is a conformal map (this follows directly from e.g., Theorem 4.3 in \cite{lang}).  But clearly, conditions 5, 4, 3 imply that $F$ maps $\Omega$ into $\mathbf T$, and further, conditions 2 and 1 imply that $F$ maps $\Omega$ \emph{onto} $\mathbf T$ (this follows from e.g., Theorem 4.1 in \cite{lang}).  Altogether, conformality of $F$ itself now follows: It is enough to show that $F^\prime$ never vanishes, but this follows from the fact that $0 \neq (F\circ \varphi)^\prime(z) = F^\prime(\varphi(z)) \varphi^\prime(z)$. 

We now turn to the task of verifying conditions 1 -- 7.  It follows from \cite{stas_perc}, \cite{cardy}, and \cite{beffara} that $F$ is analytic and that $u + v + w$ is constant.  On this basis, the real part of $F$ is proportional to $u$ plus a constant and it is seen from Lemma \ref{int_bound_value} (or Lemma \ref{sup_bound_value}) that 
$u$ is {\emph not} constant, i.e., it is close to 1 near $c$ and close to 0 near $\mathcal C$.  We have conditions 1 and 3.  Conditions 2, 4, 5 follow from Lemma \ref{int_bound_value} (or Lemma \ref{sup_bound_value}) and Remark \ref{cont_to_boundary}.

To demonstrate condition 7, let us write $\mbox{Re}(F) = (3/2)u - 1/2$.  Then if we show that $u \neq 0$ in $\Omega$, then we have demonstrated that $F(\Omega)$ does not intersect $F(\mathcal C)$.  The latter follows since once $z \in \Omega$, we can construct a tube of bounded conformal modulus connecting $\mathcal A$ to $\mathcal B$ going underneath $z$, and within this tube, by standard percolation arguments which go back to \cite{ACCFR}, we can construct a monochrome path separating $z$ from $\mathcal C$.  Condition 6 follows in a similar spirit: E.g., on the $\mathcal A$ boundary, if $z \neq q$, but $|z - q| \ll 1$, then by the argument of Lemma \ref{boundary_annuli}, $u(z)$ is close to $u(q)$ (since both can be surrounded by many annuli in which e.g., a blue circuit occurs).  Similar arguments for $v$ and $w$ and other boundaries directly imply continuity of all functions on all boundaries of $\Omega$.  Moreover, this implies, e.g., $u\circ\varphi^{-1}(\mathcal A)$ is continuous on the relevant portion of the circle starting (at 
$\varphi^{-1}(c)$) with the value 1 and ending
(at 
$\varphi^{-1}(b)$) with the value 0 and thus achieving all values in $[0,1]$.  Similarly statements hold for the other functions on the other boundaries.  Condition 6 now follows directly.

\end{proof}
\begin{remark}\label{bv_remark}
It is worth noting that while using only arguments involving RSW bounds, we have determined that 1) the $u, v, w$'s can be continued to the boundary and 2) partial boundary values, e.g., $u \equiv 0$ on $\mathcal C$, sufficient determination of boundary values requires additional ingredients.  In particular, 
 we also needed that e.g., $v + w \equiv 1$ on $\mathcal C$; this would follow from $u + v + w \equiv 1$ which at present seems only to be derivable from analyticity considerations.  Duality implies e.g., $v_\e^B + w_\e^Y \equiv 1$ on $\mathcal C$, but we cannot go any further without color symmetry as in the site percolation on the triangular lattice case (\cite{stas_perc}) or some (asymptotic) color symmetry restoration as was established for the models in \cite{cardy}.
\end{remark}

Recalling that $C_0(\Omega, a, b, c, d)$ is equal to e.g., $u(d)$ with $d \in \mathcal A$, we now have 
\begin{thm}\label{cardy_sup}
For the models described in \cite{cardy} with the assumption $M(\partial \Omega) < 2$ (which includes the triangular site problem studied in \cite{stas_perc}, where the assumption on $\partial \Omega$ is unnecessary) Cardy's Formula can be established via an interior or sup--approximation, i.e., 
\[C_\e(\Omega_\e, a_\e, b_\e, c_\e, d_\e) \rightarrow C_0(\Omega, a, b, c, d)\]
if $(\Omega_\e)$ is an interior or sup--approximation to $\Omega$.
\end{thm}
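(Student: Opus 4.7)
The plan is to combine Theorem~\ref{bv_prop} with RSW-based equicontinuity up to the boundary. By construction, $C_\e(\Omega_\e,a_\e,b_\e,c_\e,d_\e)$ is (modulo color convention) the value $u_\e^B(p_\e)$ of the discrete blue crossing function at an interior vertex $p_\e$ adjacent to $d_\e \in \mathcal A_\e$, and $C_0(\Omega,a,b,c,d) = u(d)$. So the theorem reduces to showing
\[
u_\e^B(p_\e) \longrightarrow u(d) \qquad \text{as } \e \to 0,
\]
where $u$ is the limit produced by the diagonal-subsequence construction in the preamble to Section~5. By Theorem~\ref{bv_prop} and Remark~\ref{cont_to_boundary}, $u$ is the real part (up to an affine change) of the unique conformal map $F:\Omega\to \mathbf T$ and extends continuously to $\overline\Omega$; once the boundary convergence above is in hand, the limit is therefore exactly Cardy's formula.

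To prove the boundary convergence I would use an $\e$/$\rho$ triangle inequality. Fix $\kappa > 0$ and, using continuity of $u$ up to $\partial\Omega$, pick an interior point $q \in \Omega$ within some radius $\rho$ of $d$, sitting homotopically on the $\mathcal A$ side of $d$ in the sense of Lemma~\ref{top_consist}, such that $|u(q) - u(d)| < \kappa$. Write
\[
|u_\e^B(p_\e) - u(d)| \le |u_\e^B(p_\e) - u_\e^B(q)| + |u_\e^B(q) - u(q)| + |u(q) - u(d)|.
\]
The middle term is $<\kappa$ for all $\e$ small by the uniform-on-compacts interior convergence, and the third is $<\kappa$ by the choice of $q$. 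For the first I would chain the RSW Lemma~\ref{boundary_annuli} across a finite collection of nested Harris annuli centered near $d$ and separating $p_\e$ from $q$: any configuration in the symmetric difference of the two crossing events must realize an unwanted monochromatic connection through one of these annuli (or a circuit-failure event), both of which have probability $O(\rho^\alpha)$ uniformly in small $\e$. Letting first $\e\to 0$ and then $\rho\to 0$ yields $|u_\e^B(p_\e) - u(d)| < 3\kappa$, finishing the proof.

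The main obstacle is this last RSW boundary-equicontinuity step in the \emph{sup}-approximation setting. In the interior case, every vertex near $d$ lies inside $\Omega$ under the single uniformization $\varphi$, so the Harris annuli automatically perform their separation; in the sup case, $\partial\Omega_\e$ can protrude into $\Omega$ and potentially obstruct the circuits from reaching the intended piece of $\partial\Omega_\e$, or the two sides of a slit being approximated can crisscross and scramble the boundary-value identification. This is precisely the issue addressed by the Homotopical Consistency Lemma~\ref{top_consist} and the well-organized property of Definition~\ref{sup_defn}, and is resolved in the same way as the analogous obstacle at the end of the proof of Lemma~\ref{sup_bound_value}. Once this is handled, the Minkowski-dimension hypothesis $M(\partial\Omega)<2$ enters only via the inputs from \cite{cardy} that supply the discrete conjugacy relations underlying Theorem~\ref{bv_prop}, and is redundant for the site model on the triangular lattice treated in \cite{stas_perc}.
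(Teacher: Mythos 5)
Your argument is correct and is in the same spirit as the paper's; the paper's own proof of Theorem \ref{cardy_sup} is simply a citation to \cite{stas_perc}, \cite{cardy}, \cite{beffara}, Theorem \ref{bv_prop}, and \cite{pt1} $\S$4.4, and treats the $\e$/$\rho$ boundary-equicontinuity glue you spell out (via RSW Harris annuli \`a la Lemma \ref{boundary_annuli} together with Lemma \ref{top_consist} for the sup-approximation case) as already supplied by the preceding lemmas and Remark \ref{cont_to_boundary}. Your version makes explicit the passage from interior uniform-on-compacts convergence of $u_\e$ to convergence of the crossing probability at the boundary marked point $d$, which the paper leaves implicit, but the ingredients and the route are the same.
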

\begin{proof}
For the site percolation model, this follows from \cite{stas_perc}, \cite{cardy}, \cite{beffara}, and Theorem \ref{bv_prop}.  For the model described in \cite{cardy}, the interior analyticity statement in sufficient generality is verified in \cite{pt1}, $\S$4.4.
\end{proof}

Finally, let us single out the cases that will be used in the proof of the Main Theorem in \cite{pt1}.  

\begin{cor}\label{sup_approx_conv}
Consider the models described in \cite{cardy} (which includes the triangular site problem studied in \cite{stas_perc}) on a bounded domain $\Omega$ with boundary Minkowski dimension less than two (if necessary) and two marked boundary points $a$ and $c$.  Suppose we have $\mathbb X_{[0, t]}^\e \rightarrow \mathbb X_{[0, t]}$ in the \textbf{\emph{Dist}} norm where $\mathbb X_{[0, t]}^\e$ is the trace of a discrete Exploration Process starting at $a$ and aiming towards $c$, stopped at some time $t$, then 
\[C_\e(\Omega_\e \setminus \mathbb X_{[0, t]}^\e, \mathbb X_t^\e, b_\e, c_\e, d_\e) \rightarrow C_0(\Omega \setminus \mathbb X_{[0, t]}, \mathbb X_t, b, c, d).\] 
\end{cor}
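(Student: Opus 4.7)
The approach is to recognize $\Omega_\varepsilon \setminus \mathbb{X}_{[0,t]}^\varepsilon$ as a sup-approximation to $\Omega \setminus \mathbb{X}_{[0,t]}$ in the sense of Definition \ref{sup_defn} and then invoke Theorem \ref{cardy_sup} directly. First I would identify the boundary structure of the limiting slit domain: the marked prime ends on $\partial(\Omega \setminus \mathbb{X}_{[0,t]})$ are $\mathbb{X}_t, b, c$, and the three boundary arcs consist of pieces of (one or two sides of) the slit joined to portions of the original $\partial\Omega$. The slit itself plays the role of a L\"owner curve between $\mathbb{X}_t$ and $a$, with $a$ serving as an intermediate joining point $J_k$ in the notation of Definition \ref{sup_defn}; the remaining unchanged portions of $\partial\Omega$ (including neighborhoods of $b$ and $c$) are approximated interiorly via the canonical approximation of Example \ref{can_app}, with $b_\varepsilon \to b$ etc.

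The substantive verifications are then: (a) sup-norm convergence of the slit portion is precisely the hypothesis $\mathbb{X}_{[0,t]}^\varepsilon \to \mathbb{X}_{[0,t]}$, since by Remark \ref{sup_dist} arguments are insensitive to the weighting in $\mathbf{Dist}$ versus ordinary sup-norm; (b) the well-organized property holds for any discrete percolation explorer trace, as pointed out in Remark \ref{sup_remark} -- the explorer is a self-avoiding path along lattice edges and the two sides of its trace cannot crisscross; (c) commensurability at $a$ is automatic since the explorer is launched from $a_\varepsilon \in \partial\Omega_\varepsilon$, so the discrete slit and the canonical approximation of the remaining boundary meet there; (d) the geometric conditions $(i_I), (i_{II}), (e)$ for the composite domain follow by combining Proposition \ref{metric_cond} for points away from the slit with sup-norm convergence near it: if $z_n \in (\Omega_{\varepsilon_n} \setminus \mathbb{X}_{[0,t]}^{\varepsilon_n})^c$ and $z_n \to z$, then either $z_n \in \Omega_{\varepsilon_n}^c$ forcing $z \in \Omega^c$, or $z_n$ lies on the discrete trace (or inside a region swallowed by it), in which case the sup-norm closeness of the traces forces $z$ to lie on $\mathbb{X}_{[0,t]}$ or inside a region swallowed by it.

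With the sup-approximation established, Theorem \ref{cardy_sup} delivers convergence of the crossing probability functions on compact subsets of the slit domain, and the boundary continuity from Lemma \ref{sup_bound_value} together with Remark \ref{cont_to_boundary} yields $u_\varepsilon(d_\varepsilon) \to u(d)$ whenever $d_\varepsilon \to d$. Since $C_\varepsilon = u_\varepsilon(d_\varepsilon)$ and $C_0 = u(d)$ by definition, the desired convergence follows immediately.

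The principal obstacle is handling self-touching of the slit or contact of the slit with $\partial\Omega$ away from $a$, the generic picture for an SLE$_6$-like curve exemplified in Figure \ref{swallow}. In such scenarios the slit partitions the original domain into a principal component (containing the target $c$) together with possibly many swallowed components that become part of the complement, and one must verify that the principal components of the discrete slit domains correspond, in the limit, to the principal component of $\Omega \setminus \mathbb{X}_{[0,t]}$, and further that boundary values are unambiguously retrieved across contact points on the slit. This is precisely what the homotopical consistency of Lemma \ref{top_consist} delivers, and it is the reason that sup-norm (not merely Carath\'eodory) convergence of the slit is demanded in the hypothesis. The Minkowski dimension assumption $M(\partial\Omega) < 2$ enters only for the correlated bond model of \cite{cardy}, through the interior analyticity of $F$ invoked in the proof of Theorem \ref{cardy_sup}; for the triangular site problem of \cite{stas_perc} it is unnecessary.
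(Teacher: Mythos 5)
The paper gives no explicit proof of this corollary -- it is presented as following directly from the machinery of Definition \ref{sup_defn}, Remark \ref{sup_remark}, Lemma \ref{top_consist}, Lemma \ref{sup_bound_value} and Theorem \ref{cardy_sup} -- and your proposal is a correct and faithful spelling-out of exactly that implicit argument, including the key observations (explorer traces are automatically well-organized; \textbf{Dist} versus sup-norm is immaterial by Remark \ref{sup_dist}; swallowed components and self-touching are handled by Lemma \ref{top_consist}; Minkowski dimension enters only via interior analyticity for the bond model). No gaps; your approach is the same as the paper's.
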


Further, it is possible to extract a slightly stronger statement which will be used in the proof of the Main Theorem in \cite{pt1}.  For the sake of \cite{pt1} we will state these results in the $\textbf{Dist}$ norm (c.f., Remark \ref{sup_dist}).  For purposes of clarity, we first state a lemma:

\begin{prop}\label{diag_conv}
Let us denote the type of (slit) domain under consideration by $\Omega^\gamma$ and abbreviate, by abuse of notation, e.g., $C_\e((\Omega^\gamma)_\e) := C_\e(\Omega_\e \setminus \gamma_\e([0, t]), \gamma_\e(t), b_\e, c_\e, d_\e)$ (but here, $\gamma$ could stand for other boundary pieces as detailed in Definition \ref{sup_defn}).  Then for any sequence $\gamma_n \rightarrow \gamma$ in the \textbf{\emph{Dist}} norm and any sequence $(\e_m)$ converging to zero, 
\[\lim_{n, m \rightarrow \infty} C_{\e_m}\left[(\Omega^{\gamma_n})_{\e_m}\right] = C_0(\Omega^\gamma),\]
regardless of how $n$ and $m$ tend to infinity.
\end{prop}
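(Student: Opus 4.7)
The overall plan is to argue that the family $\{(\Omega^{\gamma_n})_{\e_m}\}$, indexed jointly by $(n,m)$, constitutes a sup-approximation to $\Omega^{\gamma}$ itself, with approximation quality controlled jointly by $n$ and $m$. Once this is established, Theorem \ref{cardy_sup} applied along any subsequential direction $(n_k, m_k) \to (\infty, \infty)$ yields the conclusion, since the theorem only requires the sequence of lattice scales $\e_{m_k}$ to tend to zero together with the approximation rate. Concretely, it suffices to show: for any sequence $(n_k, m_k)$ with $n_k, m_k \to \infty$, the domains $(\Omega^{\gamma_{n_k}})_{\e_{m_k}}$ satisfy Definition \ref{sup_defn} relative to $\Omega^\gamma$ (not relative to $\Omega^{\gamma_{n_k}}$), with a rate $\eta_k \to 0$.

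The main estimate is a triangle inequality in the $\textbf{Dist}$ norm. Writing $\gamma_n^{\e_m}$ for the discrete slit used in the sup-approximation of $\Omega^{\gamma_n}$ at scale $\e_m$, the defining property of sup-approximations gives $\textbf{Dist}(\gamma_n^{\e_m}, \gamma_n) \leq \eta(\e_m)$, uniformly in $n$, while the hypothesis gives $\textbf{Dist}(\gamma_n, \gamma) \to 0$. Thus
\[
\textbf{Dist}(\gamma_n^{\e_m}, \gamma) \;\leq\; \eta(\e_m) + \textbf{Dist}(\gamma_n, \gamma) \;\longrightarrow\; 0
\]
as $n, m \to \infty$ in any joint manner. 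Combined with the interior approximations of the remaining boundary pieces (which are inherited from the sup-approximations of $\Omega^{\gamma_n}$ and carry over by the same triangle argument), this furnishes the sup-norm convergence of each labelled boundary arc of $\partial(\Omega^{\gamma_n})_{\e_m}$ to the corresponding component of $\partial \Omega^\gamma$. Once the metric conditions $(i_I),(i_{II}),(e)$ from Section 2 are checked for the joint family (using the sup-norm convergence above together with Remark \ref{sup_remark} for the handling of swallowed regions), Theorem \ref{cardy_sup} delivers $C_{\e_m}[(\Omega^{\gamma_n})_{\e_m}] \to C_0(\Omega^\gamma)$, as required.

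The main obstacle will be the verification of well-organization (Definition \ref{sup_defn}) and the conditions $(i_I),(i_{II}),(e)$ for the joint family, particularly in the scenario of Figure \ref{swallow}: the slit $\gamma_{n_k}$ and its discretization $\gamma_{n_k}^{\e_{m_k}}$ may enclose regions that are eventually swallowed by $\gamma$ but which, for each fixed $k$, are still active components of $\Omega^{\gamma_{n_k}}$. One must check that such transient components do not spoil condition $(i_{II})$ for the joint limit: if $z_k \in (\Omega^{\gamma_{n_k}})_{\e_{m_k}}^c$ and $z_k \to z$, then $z$ must lie in $(\Omega^\gamma)^c$. This follows by observing that any such $z_k$ is either on $\gamma_{n_k}^{\e_{m_k}}$ (hence, by the triangle estimate above, limits into $\gamma \subset (\Omega^\gamma)^c$), or is in the complement of the ambient $\Omega$ (so $z \in \Omega^c \subset (\Omega^\gamma)^c$), or lies in a swallowed component of $\Omega^{\gamma_{n_k}}$ (in which case a short path from $z_k$ reaches $\gamma_{n_k}^{\e_{m_k}}$ and the same triangle estimate places $z$ on $\gamma$ or inside its swallowed region). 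Condition $(e)$ is handled dually, and well-organization is preserved since each $\gamma_n^{\e_m}$ arises from a legitimate sup-approximation and the joint limit only tightens (rather than crisscrosses) the existing monochromatic arcs. With these verifications in place, Theorem \ref{cardy_sup} supplies the desired joint convergence.
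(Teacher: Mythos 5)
Your proposal is correct and follows essentially the same route as the paper: the key observation in both is that the doubly--indexed family $\gamma_{n}^{\e_m}$ is itself a sup--approximation to $\gamma$ as $m,n\to\infty$ (your triangle inequality $\textbf{Dist}(\gamma_n^{\e_m},\gamma)\le\eta(\e_m)+\textbf{Dist}(\gamma_n,\gamma)$ makes explicit what the paper states in words), after which the result follows from the sup--approximation machinery. The paper invokes Lemma \ref{sup_bound_value} directly and stresses its robustness (only $\gamma_\e$ close to $\gamma$ and $\e$ small are needed), whereas you package this by verifying Definition \ref{sup_defn} for the joint family and citing Theorem \ref{cardy_sup} -- a slightly more systematic wrapping of the same argument, with the extra (reasonable, though not strictly necessary given the paper's framing) care about well--organization and swallowed regions.
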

\begin{proof}
From Lemma \ref{sup_bound_value} we have that e.g., if $\gamma_{\e_m}^{(n)} \rightarrow \gamma_n$ is any sup--approximation, then $C_{\e_m}[(\Omega^{\gamma_n})_{\e_m}] \rightarrow C_0(\Omega^{\gamma_n})$.  The result follows by noting that $\gamma_{\e_m}^{(n)}$ is also a sup--approximation to $\gamma$ as \emph{both} $m, n \rightarrow \infty$.  We emphasize that the reason for such robustness of Lemma \ref{sup_bound_value} is because the proof is completely insensitive to how $\gamma_{\e}$ converges to $\gamma$ as $\e \rightarrow 0$.  All that is needed is that $\gamma_\e$ is sufficiently close to $\gamma$ and $\e$ is sufficiently small, which is inevitable if $\e$ is tending to zero and $\gamma_\e$ is tending to $\gamma$.
\end{proof}

\begin{cor}\label{point_equi_cont}
Considered the models described in \cite{cardy} (which includes the triangular site problem studied in \cite{stas_perc}) on a bounded domain $\Omega$ with boundary Minkowski dimension less than two (if necessary) and two marked boundary points $a$ and $c$.  Consider $\mathscr C_{a, c, \Delta}$, the set of L\"oewner curves which begin at $a$, are aiming towards $c$ but have not yet entered the $\Delta$ neighborhood of $c$ for some $\Delta > 0$.  Suppose we have $\gamma_\e \rightarrow \gamma$ e.g., in the \textbf{\emph{Dist}} norm,  then 
\[C_\e(\Omega_\e \setminus \gamma_\e([0, t]), \gamma_\e(t), b_\e, c_\e, d_\e) \rightarrow C_0(\Omega \setminus \gamma([0, t]), \gamma(t), b, c, d)\]
pointwise equicontinuously in the sense that
\begin{equation}\label{pt_equi_cont}\begin{array}{c} \forall \kappa > 0, ~~\forall \gamma \in \mathscr C_{a, c, \Omega}, ~~\exists \delta(\gamma) > 0, ~~\exists \mathcal \e_\gamma,\\ 
\\
\mbox{ such that }\\ 
\\
\forall \gamma^\prime \in \mathcal B_{\delta(\gamma)}(\gamma),~~ \forall \e \leq \mathcal \e_\gamma,\\ \\  
|C_\e((\Omega \setminus \gamma)_\e([0, t])), (\gamma(t))_\e, b_\e, c_\e, d_\e) - C_\e((\Omega \setminus \gamma^\prime)_\e([0, t])), (\gamma^\prime(t))_\e, b_\e, c_\e, d_\e)| < \kappa.\end{array}\end{equation}
Here $B_\delta(\gamma)$ denotes the \textbf{\emph{Dist}} neighborhood of $\gamma$.  
\end{cor}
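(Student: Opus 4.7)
The plan is to derive the corollary from Proposition \ref{diag_conv} by a routine compactness/contradiction argument. Suppose (\ref{pt_equi_cont}) fails; then there exist $\gamma \in \mathscr{C}_{a,c,\Delta}$ and $\kappa>0$ so that, by choosing $\delta = 1/n$ and $\e_0 = 1/n$ in the negation of the statement, one can extract sequences $\gamma_n \to \gamma$ in the $\mathbf{Dist}$ norm together with $\e_n \downarrow 0$ satisfying
\[
\bigl|C_{\e_n}\bigl((\Omega^\gamma)_{\e_n}\bigr) - C_{\e_n}\bigl((\Omega^{\gamma_n})_{\e_n}\bigr)\bigr| \geq \kappa \qquad \text{for every } n.
\]

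Next I would apply Proposition \ref{diag_conv} twice to reach a contradiction. First, taking the (trivially $\mathbf{Dist}$-convergent) constant sequence $\gamma_n \equiv \gamma$ against the mesh $\e_n \downarrow 0$ gives
\[
C_{\e_n}\bigl((\Omega^\gamma)_{\e_n}\bigr) \longrightarrow C_0(\Omega^\gamma).
\]
Second, applying Proposition \ref{diag_conv} to the actual sequence $\gamma_n \to \gamma$ together with the same $\e_n$ gives
\[
C_{\e_n}\bigl((\Omega^{\gamma_n})_{\e_n}\bigr) \longrightarrow C_0(\Omega^\gamma).
\]
The content of the second step is precisely the ``diagonal'' robustness stressed at the close of the proof of Proposition \ref{diag_conv}: the conclusion is independent of the joint manner in which $n$ and the mesh index tend to infinity, because the underlying proof of Lemma \ref{sup_bound_value} only requires that $\mathbf{Dist}(\gamma_\e,\gamma)$ be small and that $\e$ be small, with no prescribed ordering. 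Subtracting the two displays and applying the triangle inequality therefore yields
\[
\bigl|C_{\e_n}\bigl((\Omega^\gamma)_{\e_n}\bigr) - C_{\e_n}\bigl((\Omega^{\gamma_n})_{\e_n}\bigr)\bigr| \longrightarrow 0,
\]
contradicting the assumed lower bound $\kappa$.

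Two preparatory observations will make the above go through cleanly. First, for any $\gamma' \in \mathcal{B}_\delta(\gamma)$ with $\delta \ll \Delta$, the perturbed curve $\gamma'$ still lies in $\mathscr{C}_{a,c,\Delta/2}$, so the four marked prime ends of both $\Omega^\gamma$ and $\Omega^{\gamma'}$ remain uniformly well-separated; thus the hybrid sup--approximations of Definition \ref{sup_defn} (with the well-organized property discussed in Remark \ref{sup_remark}) are implementable for both, and the quantities $C_{\e_n}((\Omega^\gamma)_{\e_n})$ and $C_{\e_n}((\Omega^{\gamma_n})_{\e_n})$ are meaningful for all $n$ sufficiently large. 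Second, $\mathbf{Dist}$--proximity of $\gamma_n$ to $\gamma$ readily transfers well-organization from the limiting slit domain to the perturbed ones. The only conceptual subtlety---and thus the main obstacle---is the verification that Proposition \ref{diag_conv} is genuinely insensitive to the joint behavior of $(\gamma_n, \e_n)$, i.e., that nothing in the proof of Lemma \ref{sup_bound_value} secretly imposed an ordering between $\e \to 0$ and $\gamma_\e \to \gamma$; once this is reaffirmed, the rest of the argument is just the triangle inequality.
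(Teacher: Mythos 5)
Your proof is correct and follows essentially the same route as the paper's: negate the pointwise equicontinuity, extract sequences $\gamma_n \to \gamma$ and $\e_n \downarrow 0$ witnessing the failure, and invoke Proposition \ref{diag_conv} (once for the constant sequence $\gamma$, once for $\gamma_n$) to conclude that both quantities converge to $C_0(\Omega^\gamma)$, contradicting the assumed lower bound $\kappa$. The paper states this more tersely (``both of these objects converge to the limit $C_0(\Omega^\gamma)$''), but the underlying argument is identical; your preparatory observations are harmless elaborations.
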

\begin{proof}
This is immediate from Proposition \ref{diag_conv}.  Negation of the conclusion in the statement means that there exists a sequence $\gamma_n \rightarrow \gamma$ and $\e_n \rightarrow 0$ such that $|C_{\e_n}((\Omega^{\gamma_n})_{\e_n}) - C_\e((\Omega^{\gamma})_{\e_n})| > \kappa > 0$ for all $\e_n$, which clearly contradicts the fact that both of these objects converge to the limit $C_0(\Omega^\gamma)$.  

%
%
%
%
\end{proof}

\begin{remark}
We remark that \eqref{pt_equi_cont} holds even if ``$\e = 0$'' and thus implies continuity of Cardy's Formula in the ``\textbf{Dist} norm''.  However, we note that Lemma \ref{sup_bound_value}, being merely a limiting statement, would be highly inadequate if one had in mind some uniformity of the convergence or uniformity of the continuity.  
\end{remark}

\section*{\large{Acknowledgments}} 

\hspace{16 pt}The authors are grateful to the IPAM institute at UCLA for their hospitality and support during the ÒRandom Shapes ConferenceÓ (where this work began).  The conference was funded by the NSF under the grant DMS-0439872.  I.~B.~was partially supported by the NSERC under the DISCOVER grant 5810-2004-298433. L.~C.~was supported by the NSF under the grant DMS-0805486.  H.~K.~L was supported by the NSF under the grant DMS-0805486 and by the Dissertation Year Fellowship Program at UCLA.  
\vspace{.2 cm}

The authors would also like to thank Wendelin Werner for useful discussions which took place during the Oberwolfach conference \emph{Scaling Limits in Models of Statistical Mechanics} and which led to the present approach.  

%
%

\end{document}